 \newtheorem{theorem}{Theorem}[section]
 \newtheorem{corollary}[theorem]{Corollary}
 \newtheorem{lemma}[theorem]{Lemma}
 \theoremstyle{definition}
 \theoremstyle{remark}
 \newtheorem{example}[theorem]{Example}
 \numberwithin{equation}{section}
 \font\msbm=msbm10 at 12pt
\newcommand{\FF}{\mbox{\msbm F}}
\def\vv{\mathbf{v}}
\def\vw{\mathbf{w}}
\def\1v{\mathbf{1}}
\def\0v{\mathbf{0}}
\begin{document}

\title{Group Matrix Ring Codes and Constructions of Self-Dual Codes}
\author{S. T. Dougherty \\ University of Scranton \\ Scranton, PA, 18518,  USA  \\Adrian Korban \\
Department of Mathematical and Physical Sciences \\
University of Chester\\
Thornton Science Park, Pool Ln, Chester CH2 4NU, England \\
Serap \c{S}ahinkaya \\
Tarsus University, Faculty of Engineering \\ Department of Natural and Mathematical Sciences \\
Mersin, Turkey \\
Deniz Ustun \\
Tarsus University, Faculty of Engineering \\ Department of Computer Engineering \\
Mersin, Turkey}
 \maketitle

\begin{abstract}
In this work, we study codes generated by elements that come from group matrix rings. We present a matrix construction which we use to generate codes in two different ambient spaces: the matrix ring $M_k(R)$ and the ring $R,$ where $R$ is the commutative Frobenius ring. We show that codes over the ring $M_k(R)$ are one sided ideals in the group matrix ring $M_k(R)G$ and the corresponding codes over the ring $R$ are $G^k$-codes of length $kn.$ Additionally, we give a generator matrix for self-dual codes, which consist of the mentioned above matrix construction. We employ this generator matrix to search for binary self-dual codes with parameters $[72,36,12]$ and find new singly-even   and doubly-even codes of this type. In particular, we construct $16$ new Type~I and $4$ new Type~II binary $[72,36,12]$ self-dual codes.
\end{abstract}

\section{Introduction}\label{intro}

Self-dual codes are one of the most widely studied and interesting class of codes.  They have been shown to have strong connections to unimodular lattices, invariant theory, and designs.  In particular, binary self-dual codes have been extensively studied and numerous construction techniques of self-dual codes have been used in an attempt to find optimal self-dual codes.

In this work, we give a new construction of self-dual codes motivated by the constructions given in \cite{45} and \cite{7}.
In our construction, we use group rings where the ring is a ring of matrices to construct generator matrices of self-dual codes.   The main point of this construction is to find codes that other techniques have missed.   We construct numerous new self-dual codes using this technique.

We begin with some definitions.   A code $C$ over an alphabet $A$ of length $n$ is a subset of $A^n$.  We say that the code is linear over $A$ if $A$ is a ring and $C$ is  a submodule.  This implies that when $A$ is a finite field, then $C$ is a vector space.  We attach to the ambient space the standard Euclidean inner-product, that is $[\vv,\vw] = \sum v_i w_i$. When $A$ is commutative,  we define the orthogonal to this inner-product as $C^\perp = \{ \vw \ |  [\vw,\vv]=0, \forall \vv \in C \}.$  If the ring $A$ is not commutative, then we say that the code is either left linear or right linear depending if it  is a left or right module.    In this scenario, we have two orthogonals, namely ${\cal L}(C) = \{ \vw \ | \ [\vw,\vv]=0, \forall \vv \in C \}$ and ${\cal R}(C) = \{ \vw \ | \ [\vv,\vw]=0, \forall \vv \in C \}.$  If the ring is not commutative then these two codes are not necessarily equal, and in general will not be.  Moreover, ${\cal L}(C)$ is a left linear code and ${\cal R}(C)$ is a right linear code.   If the ring is commutative, then ${\cal L}(C) = {\cal R}(C) = C^\perp.$  It is known that if $C$ is a left linear code over a Frobenius ring $R$ then
$|C| |{\cal R}(C) | = |A^n|$ and if $C$ is a right linear code over a Frobenius ring $R$ then
$|C| |{\cal L}(C) | = |A^n|.$  For commutative rings, this gives that $|C||C^\perp|=|A^n|$ or ${\rm dim}(C) + {\rm dim} (C^\perp) =n$ as usual.  For a complete description of codes over commutative rings see \cite{Doughertybook}. For a description of codes over non-commutative rings see \cite{noncom}.  Throughout this work we assume that every ring has a multiplicative identity and is finite.

An upper bound on the minimum Hamming distance of a binary self-dual code
was given in \cite{AIX}. Specifically, let $d_{I}(n)$ and $d_{II}(n)$ be the
minimum distance of a Type~I (singly-even) and Type~II (doubly-even) binary code of length $n$,
respectively. Then
\begin{equation*}
d_{II}(n) \leq 4\lfloor \frac{n}{24} \rfloor+4
\end{equation*}
and
\begin{equation*}
d_{I}(n)\leq
\begin{cases}
\begin{matrix}
4\lfloor \frac{n}{24} \rfloor+4 \ \ \ if \ n \not\equiv 22 \pmod{24} \\
4\lfloor \frac{n}{24} \rfloor+6 \ \ \ if \ n \equiv 22 \pmod{24}.%
\end{matrix}%
\end{cases}%
\end{equation*}
Self-dual codes meeting these bounds are called \textsl{extremal}.

In this work, we shall use the theory of group rings to build codes.  We shall give the necessary definitions for this study. Let $R$ be a ring, then if $R$ has an identity $1_R,$ we say that $u \in R$ is a unit in $R$ if and only if there exists an element $w \in R$ with $uw=1_R.$
Let $G$ be a finite group of order $n$, then the group ring $RG$ consists of $\sum_{i=1}^n \alpha_i g_i$, $\alpha_i \in R$, $g_i \in G.$  

Addition in the group ring is done by coordinate addition, namely $$\sum_{i=1}^n \alpha_i g_i + \sum_{i=1}^n \beta_i g_i = 
\sum_{i=1}^n (\alpha_i + \beta_i ) g_i.$$ The product of two elements in  a group ring  is given by 
$$(\sum_{i=1}^n \alpha_i g_i)( \sum_{j=1}^n \beta_j g_j)  = \sum_{i,j} \alpha_i \beta_j g_i g_j .$$ This gives that the coefficient of $g_k$ in the product is $ \sum_{g_i g_j = g_k } \alpha_i \beta_j .$  
Notice, that while group rings can use rings and groups of arbitrary cardinality, we restrict ourselves to finite groups and finite rings.  Note that we have not assumed that group nor the ring is commutative.

The space of $k$ by $k$ matrices with coefficients in the ring $R$ is denoted by $M_k(R).$ 
It is immediate that $M_k(R)$ is a ring, however, it is, in general, a non-commutative ring.   Moreover, it is fundamental in the study of non-commutative rings since any finite ring moded out by its Jacobson radical is isomorphic to a direct product of matrix rings.  Moreover, we know that $M_k(R)$ is a Frobenius ring, when $R$ is Frobenius.

A circulant matrix is one where each row is shifted one element to the right relative to the preceding row. We label the circulant matrix as $A=CIRC(\alpha_1,\alpha_2\dots , \alpha_n),$ where $\alpha_i$ are ring elements. A block-circulant matrix is one where each row contains blocks which are square matrices. The rows of the block matrix are defined by shifting one block to the right relative to the preceding row. We label the block-circulant matrix as $\mbox{CIRC}(A_1,A_2,\dots A_n),$ where $A_i$ are $k \times k$ matrices over the ring $R.$ The transpose of a matrix $A,$ denoted by $A^T,$ is a matrix whose rows are the columns of $A,$ i.e., $A^T_{ij}=A_{ji}.$  A symmetric matrix is a square matrix that is equal to its transpose. A persymmetric matrix is a square matrix which is symmetric with respect to the north-east-to-south-west diagonal.

\section{Matrix Construction from Group Matrix Rings}

The following construction of a matrix  which was used to construct codes that were ideals in a group ring was first given for codes over fields
by Hurley in \cite{7}. It was then extended to finite commutative Frobenius rings in \cite{45}. Let $
R$ be a finite commutative Frobenius ring and let $G=\{g_1,g_2,\dots,g_n\}$
be a group of order $n$. Let $v=\alpha_{g_1}g_1+\alpha_{g_2}g_2+\dots
+\alpha_{g_n}g_n \in RG.$ Define the matrix $\sigma(v) \in M_n(R)$ to be
\begin{equation}\label{sigmav}
\sigma(v)=%
\begin{pmatrix}
\alpha_{g_1^{-1}g_1} & \alpha_{g_1^{-1}g_2} & \alpha_{g_1^{-1}g_3} & \dots &
\alpha_{g_1^{-1}g_n} \\
\alpha_{g_2^{-1}g_1} & \alpha_{g_2^{-1}g_2} & \alpha_{g_2^{-1}g_3} & \dots &
\alpha_{g_2^{-1}g_n} \\
\vdots & \vdots & \vdots & \vdots & \vdots \\
\alpha_{g_n^{-1}g_1} & \alpha_{g_n^{-1}g_2} & \alpha_{g_n^{-1}g_3} & \dots &
\alpha_{g_n^{-1}g_n}%
\end{pmatrix}%
.
\end{equation}

We note that the elements $g_1^{-1}, g_2^{-1}, \dots, g_n^{-1}$ are the
elements of the group $G$ in a some given order.

This matrix was used as a generator matrix for codes.  The form of the matrix guaranteed that the resulting code would correspond to an ideal in the group ring and thus have the group $G$ as a subgroup of its automorphism group, that is the group $G$, acting on the coordinates, would leave the code fixed.  The fundamental purpose of this was to construct codes that were not found using more traditional construction techniques.  It was shown in \cite{45} that certain classical constructions would only produce a subset of all possible codes (in this case self-dual codes) and would often miss codes that were of particular interest.  With this in mind we are interested in expanding these kinds of constructions to enable us to find codes that would be missed with other construction techniques.

We now generalize the matrix construction just defined.   Let $R$ be a finite commutative ring and let $G=\{g_1,g_2,\dots,g_n\}$
be a group of order $n$.  We note that  no assumption about the groups commutativity is made.  Let $v=A_{g_1}g_1+A_{g_2}g_2+\dots+A_{g_n}g_n \in M_k(R)G,$ that is, each $A_{g_i}$ is a $k \times k$ matrix with entries from the ring $R.$ Define the block matrix $\sigma_k(v) \in (M_{k}(R))_n$ to be

\begin{equation}\label{sigmakv}
\sigma_k(v)=
\begin{pmatrix}
A_{g_1^{-1}g_1} & A_{g_1^{-1}g_2} & A_{g_1^{-1}g_3} & \dots &
A_{g_1^{-1}g_n} \\
A_{g_2^{-1}g_1} & A_{g_2^{-1}g_2} & A_{g_2^{-1}g_3} & \dots &
A_{g_2^{-1}g_n} \\
\vdots & \vdots & \vdots & \vdots & \vdots \\
A_{g_n^{-1}g_1} & A_{g_n^{-1}g_2} & A_{g_n^{-1}g_3} & \dots &
A_{g_n^{-1}g_n}
\end{pmatrix}
.
\end{equation}

We note that the element $v$ is an element of the group matrix ring $M_k(R)G.$
Of course, this is the same construction as was previously given for group rings but we specify it here since we will use it in very different ways.  Namely, we can consider the matrix as generating two distinct codes in different ambient spaces.

This group matrix ring can be non-commutative in two ways.  First, since the group may not be commutative, multiplication on the left by an element $g \in G$ can give a different element than multiplication on the right by $g$.  We note that in generating the matrix $\sigma_k(v)$ the rows are formed from elements that were constructed by a group element multiplying on the left.   Moreover, multiplication by an element $B \in M_k(R)$ on the left   can give a different element than multiplication on the right by $B$.

As in the matrix $\sigma(v)$ from Equation~\ref{sigmav}, the elements $g_1^{-1}, g_2^{-1}, \dots, g_n^{-1}$ are the
elements of the group $G$ given in a some order.  This order is used in order aid in the computational aspects of some proofs.    We note that when $k=1$ then $\sigma_1(v)=\sigma(v),$ that is, $\sigma_1(v)$ is equivalent to the matrix $\sigma(v)$ in the original definition.  In general, we shall often assume that $k>1.$

The next theorem sets up some useful algebraic tools.  

\begin{theorem}\label{sigmakisringhomomorphism}
Let $R$ be a finite commutative ring. 
Let $G$ be a group of order $n$ with a fixed listing of its elements. Then the map $ \sigma_k: M_k(R)G  \rightarrow M_n(M_k(R))$ is a bijective matrix ring homomorphism.
\begin{proof}
Let $G=\{g_1,g_2,\dots,g_n\}$ be the listing of the elements of $G.$ Now define the map $\sigma_k : M_k(R)G \rightarrow M_n(M_k(R))$ as follows. Suppose $v=\sum_{i=1}^n A_{g_i}g_i.$ Then
$$\sigma_k(v)=
\begin{pmatrix}
A_{g_1^{-1}g_1} & A_{g_1^{-1}g_2} & A_{g_1^{-1}g_3} & \dots &
A_{g_1^{-1}g_n} \\
A_{g_2^{-1}g_1} & A_{g_2^{-1}g_2} & A_{g_2^{-1}g_3} & \dots &
A_{g_2^{-1}g_n} \\
\vdots & \vdots & \vdots & \vdots & \vdots \\
A_{g_n^{-1}g_1} & A_{g_n^{-1}g_2} & A_{g_n^{-1}g_3} & \dots &
A_{g_n^{-1}g_n}
\end{pmatrix}$$
where each $A_{g_i}$ is a square matrix of order $k.$ It can be easily verified that this mapping is additive, surjective and injective. We now show that $\sigma_k$ is multiplicative. Consider $w=\sum_{i=1}^n B_{g_i}g_i$ then
$$\sigma_k(w)=
\begin{pmatrix}
B_{g_1^{-1}g_1} & B_{g_1^{-1}g_2} & B_{g_1^{-1}g_3} & \dots &
B_{g_1^{-1}g_n} \\
B_{g_2^{-1}g_1} & B_{g_2^{-1}g_2} & B_{g_2^{-1}g_3} & \dots &
B_{g_2^{-1}g_n} \\
\vdots & \vdots & \vdots & \vdots & \vdots \\
B_{g_n^{-1}g_1} & B_{g_n^{-1}g_2} & B_{g_n^{-1}g_3} & \dots &
B_{g_n^{-1}g_n}
\end{pmatrix}.$$
Now suppose $w * v=t,$ where $t=\sum_{i=1}^n C_{g_i}g_i.$ Then
$$\sigma_k(w) * \sigma_k(v)=
\begin{pmatrix}
C_{g_1^{-1}g_1} & C_{g_1^{-1}g_2} & C_{g_1^{-1}g_3} & \dots &
C_{g_1^{-1}g_n} \\
C_{g_2^{-1}g_1} & C_{g_2^{-1}g_2} & C_{g_2^{-1}g_3} & \dots &
C_{g_2^{-1}g_n} \\
\vdots & \vdots & \vdots & \vdots & \vdots \\
C_{g_n^{-1}g_1} & C_{g_n^{-1}g_2} & C_{g_n^{-1}g_3} & \dots &
C_{g_n^{-1}g_n}
\end{pmatrix}$$
and this is $\sigma_k(t)=\sigma_k(w * v)$ as required.
\end{proof}
\end{theorem}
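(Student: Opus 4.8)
The plan is to verify the three homomorphism axioms in turn, treating additivity and injectivity as formal and reserving the real work for multiplicativity. Write $v=\sum_{i=1}^n A_{g_i}g_i$ and $w=\sum_{i=1}^n B_{g_i}g_i$ in $M_k(R)G$, and recall that the $(a,b)$ block of $\sigma_k(v)$ is $A_{g_a^{-1}g_b}$.

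Additivity is immediate: the $(a,b)$ block of $\sigma_k(v)$ depends on $v$ only through the single coefficient $A_{g_a^{-1}g_b}$, and since addition in $M_k(R)G$ is coefficientwise while addition in $M_n(M_k(R))$ is blockwise, $\sigma_k(v+w)=\sigma_k(v)+\sigma_k(w)$. For injectivity I would use the first block-row, whose blocks are $A_{g_1^{-1}g_j}$ for $j=1,\dots,n$; since left translation by $g_1^{-1}$ permutes $G$, these blocks list every coefficient $A_g$ with $g\in G$, so $\sigma_k(v)$ determines $v$ completely. Hence $\sigma_k$ is injective and therefore restricts to a bijection onto its image $\mathrm{im}(\sigma_k)\subseteq M_n(M_k(R))$, which together with multiplicativity yields the asserted ring isomorphism.

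The heart of the argument is multiplicativity. The product in the group matrix ring is $wv=\sum_k C_{g_k}g_k$ with $C_{g_k}=\sum_{g_ig_j=g_k}B_{g_i}A_{g_j}$, where the order of the matrix factors must be preserved because $M_k(R)$ is non-commutative; thus the $(a,b)$ block of $\sigma_k(wv)$ is $C_{g_a^{-1}g_b}=\sum_{g_ig_j=g_a^{-1}g_b}B_{g_i}A_{g_j}$. Computing the block-matrix product directly gives $(\sigma_k(w)\sigma_k(v))_{a,b}=\sum_{c=1}^n B_{g_a^{-1}g_c}A_{g_c^{-1}g_b}$, again with $B$ on the left. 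The key step is the substitution $g_i:=g_a^{-1}g_c$, which runs once over all of $G$ as $c$ runs over $1,\dots,n$; the second index is then $g_j:=g_c^{-1}g_b=g_i^{-1}(g_a^{-1}g_b)$, so that $g_ig_j=g_a^{-1}g_b$ automatically. This identifies the block sum with $\sum_{g_ig_j=g_a^{-1}g_b}B_{g_i}A_{g_j}=C_{g_a^{-1}g_b}$, proving $\sigma_k(w)\sigma_k(v)=\sigma_k(wv)$.

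I expect the main obstacle to be bookkeeping rather than conceptual: one must check that $c\mapsto g_a^{-1}g_c$ is a bijection of the summation set and, crucially, that no step silently commutes either two group elements or two matrix blocks. The argument uses only associativity in $G$, the bijectivity of left translation, and the associativity and distributivity of $M_k(R)$, so it survives verbatim without any commutativity hypothesis on $G$ or on $M_k(R)$; the commutativity of $R$ enters only to guarantee that $M_k(R)$ is a Frobenius ring in the first place. Assembling these facts, $\sigma_k$ is an injective ring homomorphism and hence a ring isomorphism onto its image, which completes the proof.
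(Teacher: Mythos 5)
Your proof is correct and follows the same basic strategy as the paper (reduce everything to the block description $(\sigma_k(v))_{a,b}=A_{g_a^{-1}g_b}$ and compare blockwise), but it is more careful in two places where the paper is not. First, the paper simply asserts that $\sigma_k(w)\,\sigma_k(v)$ equals the matrix of coefficients of $w*v$; you actually carry out the computation, and your key substitution $g_i:=g_a^{-1}g_c$, $g_j:=g_c^{-1}g_b$ (so that $g_ig_j=g_a^{-1}g_b$ automatically and $c\mapsto g_i$ is a bijection of the index set) is exactly the right bookkeeping, with the matrix factors kept in the correct order. Second, and more importantly, you prove injectivity and then claim only a bijection \emph{onto the image} of $\sigma_k$, whereas the paper claims the map is surjective onto all of $M_n(M_k(R))$. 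The paper's claim is false for $n>1$: the domain $M_k(R)G$ has $|M_k(R)|^{n}$ elements while $M_n(M_k(R))$ has $|M_k(R)|^{n^2}$, and indeed the image consists only of those block matrices whose $(a,b)$ block depends on $g_a^{-1}g_b$ alone. So your version is the statement that is actually true (and is what Hurley's original result says); just be aware that it is strictly weaker than the theorem as literally stated, and that the rest of the paper only ever uses injectivity and the homomorphism property, so nothing downstream is harmed. One minor quibble: commutativity of $R$ plays no role in this theorem at all (Frobenius is not used here), so your closing remark about it is harmless but unnecessary.
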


Let  the first column of the matrix in Equation~(\ref{sigmakv}) be labelled by $g_1,$ the second column by $g_2,$ etc. Then if $b=\sum_{i=1}^n B_{g_i}g_i$ is in $M_k(R)G$ then the coefficient of $g_i$ in the product $b * v$ is $(B_{g_1},B_{g_2},\dots,B_{g_n})$ times the i-th column of $\sigma_k(v).$  We note here that we are multiplying by $b$ on the left.   This could easily be done on the right to get a similar result.  

We define the $k$ by $k$ matrix  $I_k$ in the usual way.  That is $(I_k)_{ij} =1$ if $i=j$ and $(I_k)_{ij} =0$ if $i \neq j.$
Since each ring we consider in this paper has a multiplicative identity this matrix is always an element in $M_k(R).$

\begin{theorem}
Let $R$ be a finite commutative ring.   Then $v \in M_k(R)G$ is a unit in $M_k(R)G$ if and only if $\sigma_k(v)$ is a unit in $M_n(M_k(R)).$
\begin{proof}
Suppose $v$ is a unit in $M_k(R)G$ and that $w$ is its inverse. Then $v * w=(I_k)_{M_k(R)G}$ and hence $\sigma_k(v * w)=\sigma_k((I_k)_{M_k(R)}G)=I_{kn},$ the identity matrix in $M_n(M_k(R))  .$ Thus $\sigma_k(v) * \sigma_k(w)=I_{kn}.$ Similarly, $\sigma_k(w) * \sigma_k(v)=I_{kn}$ and so $\sigma_k(v)$ is invertible in $M_n(M_k(R))  .$

Suppose now that $\sigma_k(v)$ is a unit in $M_n(M_k(R))  $ and let $N$ denote its inverse. Let $v=\sum_{i=1}^n A_{g_i}g_i.$ Then
$$\sigma_k(v)=
\begin{pmatrix}
A_{g_1^{-1}g_1} & A_{g_1^{-1}g_2} & A_{g_1^{-1}g_3} & \dots &
A_{g_1^{-1}g_n} \\
A_{g_2^{-1}g_1} & A_{g_2^{-1}g_2} & A_{g_2^{-1}g_3} & \dots &
A_{g_2^{-1}g_n} \\
\vdots & \vdots & \vdots & \vdots & \vdots \\
A_{g_n^{-1}g_1} & A_{g_n^{-1}g_2} & A_{g_n^{-1}g_3} & \dots &
A_{g_n^{-1}g_n}
\end{pmatrix}$$
where each $A_{g_i}$ is a square matrix of order $k.$ Let $(B_1,B_2,\dots,B_n)$ be the first row of $N,$ where $B_i$ are the square matrices each of order $k.$ Then:
\begin{equation}
\begin{matrix}
B_1A_{g_{1}^{-1}g_1}& +& B_2 A_{g_{2}^{-1}g_1}& +& \dots &+& B_n A_{g_{n}^{-1}g_1}& =& I_k, \\

B_1 A_{g_{1}^{-1}g_2}& +& B_2 A_{g_{2}^{-1}g_2}& +& \dots &+& B_n A_{g_{n}^{-1}g_2}& =& \mathbf{0}, \\
\vdots & \vdots & \vdots & \vdots & \vdots & \vdots & \vdots & \vdots &\vdots  \\
B_1 A_{g_{1}^{-1}g_n}& +& B_2 A_{g_{2}^{-1}g_n}& +& \dots &+& B_n A_{g_{n}^{-1}g_n}& =& \mathbf{0}.
\end{matrix}
\end{equation}

Now $v=A_{g_1}g_1+A_{g_2}g_2+\dots+A_{g_n}g_n=A_{g_{i}^{-1}g_1}g_{i}^{-1}g_1+A_{g_{i}^{-1}g_2}g_{i}^{-1}g_2+\dots +A_{g_{i}^{-1}g_n}g_{i}^{-1}g_n,$ for each $i,$ $1 \leq i \leq n.$

Define $w=B_1g_{1}+B_2g_{2}+\dots+B_ng_{n}.$ Then:

$$B_ig_i(A_{g_1}g_1+A_{g_2}g_2+\dots +A_{g_n}g_n)=B_ig_i A_{g_{i}^{-1}g_1}g_{i}^{-1}g_1+B_ig_{i} A_{g_{i}^{-1}g_2}g_{i}^{-1}g_2+$$
$$+\dots +B_ig_{i} A_{g_{i}^{-1}g_n}g_{i}^{-1}g_n=B_i A_{g_{i}^{-1}g_1}g_1+B_i A_{g_{i}^{-1}g_2}g_2+\dots +B_i A_{g_{i}^{-1}g_n}g_n.$$
Hence: $v * w=(B_1g_{1}+B_2g_{2}+\dots+B_ng_{n})(A_{g_1}g_1+A_{g_2}g_2+\dots+A_{g_n}g_n)$ equals to:

$$\begin{matrix}
 &B_1 A_{g_{1}^{-1}g_1}g_1& +& B_2 A_{g_{2}^{-1}g_1}g_1& +& \dots &+& B_n A_{g_{n}^{-1}g_1}g_1&  \\

+ &B_1 A_{g_{1}^{-1}g_2}g_2& +& B_2 A_{g_{2}^{-1}g_2}g_2& +& \dots &+& B_n A_{g_{n}^{-1}g_2}g_2&  \\
\vdots &\vdots & \vdots & \vdots & \vdots & \vdots & \vdots & \vdots   \\
+ &B_1 A_{g_{1}^{-1}g_n}g_n& +& B_2 \alpha_{g_{2}^{-1}g_n}g_n& +& \dots &+& B_n A_{g_{n}^{-1}g_n}g_n& 
\end{matrix}$$
and this is $g_1$ from the above. Thus $g_1^{-1} * w$ is the inverse of $v$ and $v$ is a unit in $M_k(R).$
\end{proof}
\end{theorem}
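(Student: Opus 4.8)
The plan is to deduce this immediately from Theorem~\ref{sigmakisringhomomorphism}, which establishes that $\sigma_k$ is a bijective ring homomorphism, i.e.\ a ring isomorphism from $M_k(R)G$ onto $M_n(M_k(R))$. It is a general fact that a ring isomorphism restricts to a bijection between the groups of units of its domain and codomain, carrying each unit to a unit and preserving inverses; the statement is simply the instance of this fact for $\sigma_k$. So rather than constructing an inverse by hand, I would invoke the isomorphism directly.

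For the forward direction, suppose $v$ is a unit in $M_k(R)G$ with inverse $w$, so that $v * w = w * v = \mathbf{1}$, where $\mathbf{1} = I_k\, e$ denotes the identity of $M_k(R)G$ ($e$ being the identity of $G$). Applying $\sigma_k$ and using its multiplicativity together with $\sigma_k(\mathbf{1}) = I_{kn}$ gives $\sigma_k(v)\,\sigma_k(w) = \sigma_k(w)\,\sigma_k(v) = I_{kn}$, so $\sigma_k(v)$ is a unit in $M_n(M_k(R))$ with inverse $\sigma_k(w)$. For the reverse direction, suppose $\sigma_k(v)$ is a unit with inverse $N \in M_n(M_k(R))$. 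By surjectivity of $\sigma_k$ there is a $w \in M_k(R)G$ with $\sigma_k(w) = N$. Then $\sigma_k(v * w) = \sigma_k(v)\, N = I_{kn} = \sigma_k(\mathbf{1})$, and injectivity of $\sigma_k$ forces $v * w = \mathbf{1}$; the identical argument on the other side yields $w * v = \mathbf{1}$. Hence $v$ is a unit.

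The argument is essentially formal, so there is no serious obstacle; the only points requiring care are bookkeeping. First, one must confirm that the identity of $M_k(R)G$ is sent to the identity matrix, i.e.\ $\sigma_k(I_k\, e) = I_{kn}$: since $g_i^{-1}g_j = e$ precisely when $i = j$, the block matrix $\sigma_k(I_k\, e)$ has $I_k$ in each diagonal block and the zero matrix elsewhere, which is exactly $I_{kn}$. Second, because $M_k(R)G$ and $M_n(M_k(R))$ are noncommutative, one should in principle track inverses on both sides; however, as noted in the paper these rings are finite, and in a finite ring a one-sided inverse is automatically two-sided, so the one-sided notion of unit used here coincides with the two-sided one and the two directions above are consistent. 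An alternative, more hands-on route---exhibiting the inverse explicitly from, say, the first row of $N$---is available, but the isomorphism argument is shorter and makes the role of Theorem~\ref{sigmakisringhomomorphism} transparent.
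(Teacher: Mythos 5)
Your proof is correct, and for the harder direction it takes a genuinely different route from the paper. The forward direction (apply $\sigma_k$ to $v*w = w*v = \mathbf{1}$ and use multiplicativity) is identical in both. For the reverse direction, however, the paper constructs the inverse of $v$ by hand: it takes the first row $(B_1,\dots,B_n)$ of $N=\sigma_k(v)^{-1}$, writes down the element $w=\sum B_i g_i$, and verifies by a direct group-ring computation that the product with $v$ equals $I_k g_1$, so that $g_1^{-1}w$ is a (one-sided) inverse of $v$. You instead invoke Theorem~\ref{sigmakisringhomomorphism} wholesale: since $\sigma_k$ is a bijective ring homomorphism, pull $N$ back by surjectivity and push the identities $v*w=\mathbf{1}$ and $w*v=\mathbf{1}$ through injectivity. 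Your version is shorter, avoids the index bookkeeping entirely, and --- unlike the paper --- explicitly handles two points the paper glosses over: that $\sigma_k(\mathbf{1})=I_{kn}$, and that the one-sided inverse obtained is automatically two-sided (you give the finiteness argument; you could equally note that your isomorphism argument produces both sides at once, so the finiteness remark is not even needed). The trade-off is that your proof is only as strong as the claim in Theorem~\ref{sigmakisringhomomorphism} that $\sigma_k$ is bijective and multiplicative, which that theorem asserts with minimal verification; the paper's explicit construction in the reverse direction serves as an independent, self-contained check of exactly the computation that underlies that claim.
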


\section{Group Matrix Ring Codes}

In this section, we employ the matrix construction from the previous section to generate codes in two different ambient spaces.
We  make two distinct constructions.  

{\bf Construction 1} 
For a given element $v \in M_k(R)G,$ we define the following code over the matrix ring $M_k(R)$:
\begin{equation}
C_k(v)=\langle \sigma_k(v) \rangle.
\end{equation}
Here the code is generated by taking the all left linear combinations of the rows of the matrix with coefficients in $M_k(R).$ 

{\bf Construction 2} 
For a given element $v \in M_k(R)G,$ we define the following  code over the  ring $R$.
Construct the matrix $\tau_k(v)$ by viewing each element in a $k$ by $k$ matrix as an element in the larger matrix.  
\begin{equation}
B_k(v)=\langle \tau_k(v) \rangle.
\end{equation}
Here the code $B_k(v)$ is formed by taking all linear combinations of the rows of the matrix with coefficients in $R$.  
In this case the ring over which the code is defined is commutative so it is both a left linear and right linear code.  

The following lemma is immediate.  

\begin{lemma}  Let $R$ be a finite Frobenius ring and let $G$ be a group of order $n$.  Let $v \in M_k(R)G$.  
\begin{enumerate}
\item The matrix $\sigma_k(v)$ is an $n$ by $n$ matrix with elements from $M_k(R)$ and the code $C_k(v)$ is a length $n$ code over $M_k(R)$.  
\item The matrix $\tau_k(v)$ is an $nk$ by $nk$ matrix with elements from $R$ and the code $B_k(v)$ is a length $nk$ code over $R$.  
\end{enumerate} 
\end{lemma}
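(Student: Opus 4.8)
The plan is to prove both parts directly from the definitions of $\sigma_k$, $\tau_k$, and the two generating operations, since every assertion is a statement about the shape of the generating matrices and the ambient modules in which the codes live. No substantive machinery beyond the constructions themselves is needed.

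For the first part, I would begin by recalling that an element $v \in M_k(R)G$ has the form $v = \sum_{i=1}^n A_{g_i} g_i$ with each coefficient $A_{g_i} \in M_k(R)$, that is, a $k \times k$ matrix over $R$. The entries of $\sigma_k(v)$ displayed in Equation~(\ref{sigmakv}) are exactly the matrices $A_{g_i^{-1} g_j}$; since $g_i^{-1} g_j \in G$ for every pair $(i,j)$ by closure of $G$ under inversion and multiplication, each such entry is one of the coefficients of $v$ and therefore lies in $M_k(R)$. As $i$ and $j$ each run from $1$ to $n$, the matrix $\sigma_k(v)$ has $n$ rows and $n$ columns, which establishes that it is an $n \times n$ matrix over $M_k(R)$. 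Each row is then an element of the free left module $(M_k(R))^n$, and by Construction~1 the code $C_k(v) = \langle \sigma_k(v) \rangle$ is the left $M_k(R)$-submodule generated by these $n$ rows; hence it is a length $n$ code over $M_k(R)$ by definition.

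For the second part, I would invoke the description of $\tau_k(v)$ as the matrix obtained from $\sigma_k(v)$ by viewing each $k \times k$ block entry as a collection of scalar entries, thereby flattening the $n \times n$ block structure into a single matrix over $R$. Each of the $n$ block rows expands into $k$ ordinary rows, and each of the $n$ block columns into $k$ ordinary columns, so $\tau_k(v)$ has $nk$ rows and $nk$ columns with all entries in $R$. Consequently each row is an element of $R^{nk}$, and by Construction~2 the code $B_k(v) = \langle \tau_k(v) \rangle$ is the $R$-submodule of $R^{nk}$ generated by these rows, so it is a length $nk$ code over $R$.

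Since both statements reduce to bookkeeping about matrix dimensions and module membership, I do not expect a genuine obstacle; the only point requiring a small amount of care is confirming that every index pair $(i,j)$ produces a legitimate group element $g_i^{-1} g_j$ whose associated coefficient is defined, which is immediate from the group axioms. I would also remark that the Frobenius hypothesis on $R$ plays no role in these dimension counts---it is recorded in the statement only because it is the standing assumption required for the duality theory developed subsequently---so the lemma in fact holds over any finite ring with identity.
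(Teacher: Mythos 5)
Your proof is correct and matches the paper's treatment: the paper states this lemma with no proof at all, labelling it ``immediate,'' and your argument is exactly the routine unwinding of the definitions of $\sigma_k$, $\tau_k$, and the two constructions that the authors had in mind. Your closing observation that the Frobenius hypothesis is not actually used in the dimension count is also accurate.
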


We illustrate these construction techniques in the following example.

\begin{example}
Let $$v=\begin{pmatrix}
0&0 \\
0&0
\end{pmatrix}+\begin{pmatrix}
0&0 \\
0&0
\end{pmatrix}a+\begin{pmatrix}
0&0 \\
0&0
\end{pmatrix}a^2+\begin{pmatrix}
0&1 \\
1&0
\end{pmatrix}a^3+\begin{pmatrix}
0&1 \\
1&0
\end{pmatrix}b+$$
$$+\begin{pmatrix}
1&1 \\
1&1
\end{pmatrix}ba+\begin{pmatrix}
1&1 \\
1&1
\end{pmatrix}ba^2+\begin{pmatrix}
1&1 \\
1&1
\end{pmatrix}ba^3 \in M_2(\mathbb{F}_2)D_8,$$ where the group $\langle a,b \rangle \cong D_8,$ the dihedral group with $8$ elements.  Then 
$\sigma_2(v)$ generates a code $C_2(v)$ which is the ambient space $M_2(\FF_2)^8.$

The matrix 

$$\small  \tau_2(v)=\begin{pmatrix}
\begin{tabular}{cccccccccccccccc}
0&0&0&0&0&0&0&1& 0&1&1&1&1&1&1&1\\
0&0&0&0&0&0&1&0& 1&0&1&1&1&1&1&1\\
0&1&0&0&0&0&0&0& 1&1&0&1&1&1&1&1\\
1&0&0&0&0&0&0&0& 1&1&1&0&1&1&1&1\\
0&0&0&1&0&0&0&0& 1&1&1&1&0&1&1&1\\
0&0&1&0&0&0&0&0& 1&1&1&1&1&0&1&1\\
0&0&0&0&0&1&0&0& 1&1&1&1&1&1&0&1\\
0&0&0&0&1&0&0&0& 1&1&1&1&1&1&1&0\\

0&1&1&1&1&1&1&1& 0&0&0&1&0&0&0&0\\
1&0&1&1&1&1&1&1& 0&0&1&0&0&0&0&0\\
1&1&0&1&1&1&1&1& 0&0&0&0&0&1&0&0\\
1&1&1&0&1&1&1&1& 0&0&0&0&1&0&0&0\\
1&1&1&1&0&1&1&1& 0&0&0&0&0&0&0&1\\
1&1&1&1&1&0&1&1& 0&0&0&0&0&0&1&0\\
1&1&1&1&1&1&0&1& 0&1&0&0&0&0&0&0\\
1&1&1&1&1&1&1&0& 1&0&0&0&0&0&0&0\\

\end{tabular}
\end{pmatrix}$$
and $\tau_2(v)$ can be row reduced to  

$$\small \begin{pmatrix}
\begin{tabular}{cccccccccccccccc}
0&0&0&0&0&0&0&1& 0&1&1&1&1&1&1&1\\
0&0&0&0&0&0&1&0& 1&0&1&1&1&1&1&1\\
0&1&0&0&0&0&0&0& 1&1&0&1&1&1&1&1\\
1&0&0&0&0&0&0&0& 1&1&1&0&1&1&1&1\\
0&0&0&1&0&0&0&0& 1&1&1&1&0&1&1&1\\
0&0&1&0&0&0&0&0& 1&1&1&1&1&0&1&1\\
0&0&0&0&0&1&0&0& 1&1&1&1&1&1&0&1\\
0&0&0&0&1&0&0&0& 1&1&1&1&1&1&1&0
\end{tabular}
\end{pmatrix}.$$
It can be easily checked that $B_2(v)$ is a binary self-dual code with parameters $[16,8,4]$.
\end{example}

It is clear that the aim of this construction is to construct interesting binary self-dual codes. That is, over the matrix ring, the code constructed was trivial, however, over the binary field the code constructed was an interesting self-dual code.

It is apparent that this generalization opens up a new direction for new constructions of codes. This is because the matrix $\tau_k(v)$ does not only depend on the ring elements and the finite group $G$ as does the matrix $\sigma_k(v)$, but rather $\tau_k(v)$ also depends on the form of the matrices $A_{g_i}.$ We note that the $k \times k$ matrices $A_{g_i}$ over $R$ can each take a different form - this is the first advantage of our generalization over the matrix $\sigma(v)$. We also note that the matrix $\sigma_k(v)$ gives us more freedom for controlling the search field when finding a special family of codes since the matrices $A_{g_i}$ do not have to be fully defined by the ring elements appearing in their first rows - and this is the second advantage of our generalization over the matrix $\sigma(v).$

\begin{theorem}
Let $R$ be a finite commutative Frobenius ring, $k$ a positive integer and  $G$ a finite group of order $n.$ Let $v \in M_k(R)G$. Let $I_k(v)$ be the set of elements of $M_k(R)G$ such that $\sum A_ig_i \in I_k(v)$ if and only if $(A_1,A_2,\dots,A_n) \in C_k(v).$ Then $I_k(v)$ is a one-sided ideal in $M_k(R)G$ and in particular, it is a left ideal.
\begin{proof}
Each row of $\sigma_k(v)$ corresponds to an element of the form $hv$ in $M_k(R)G,$ where $h$ is any element of $G.$ That is, the multiplication by $h$ is done from left. The sum of any two elements in $I(v)$ corresponds exactly to the sum of the corresponding elements in $C_k(v)$ and so $I_k(v)$ is closed under addition. 

Now we shall show when the product of an element in $M_k(R)G$ and an element in $I_k(v)$ is in $I_k(v).$  Let $w_1=\sum B_ig_i \in M_k(R)G,$ where $B_i$ are the $k \times k$ matrices. Then if $w_2$ is a row in $C_k(v),$ it is of the form $\sum C_jh_jv.$ Then $w_1w_2=\sum B_ig_i \sum C_jh_jv= \sum B_iC_jg_ih_jv$ which corresponds to an element in $C_k(v)$ gives that the element is in $I_k(v).$ Next, consider $w_2w_1=\sum C_jh_jv \sum B_ig_i=\sum C_jh_jvB_ig_i$ which may not be an element in $C_k(v).$ Thus, $I_k(v)$ is a left ideal of $M_k(R)G$ and since $M_k(R)$ is a non-commutative matrix ring, we have that $I_k(v)$ is a one-sided ideal of $M_k(R)G.$
\end{proof}
\end{theorem}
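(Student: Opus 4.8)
The plan is to identify $I_k(v)$ with the principal left ideal $M_k(R)G\,v$ generated by $v$; since any principal left ideal is a left ideal, the left-ideal claim follows immediately, and the failure to be a right ideal will be read off from the same description.

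The first step is to interpret the rows of $\sigma_k(v)$ inside the group matrix ring. Writing $v=\sum_j A_{g_j}g_j$ and regarding a group element as $I_k\,g_i$, a direct computation gives
\[
g_i v=\sum_j A_{g_j}\,g_i g_j=\sum_m A_{g_i^{-1}g_m}\,g_m ,
\]
so the coefficient of $g_m$ in $g_i v$ is $A_{g_i^{-1}g_m}$. Comparing with Equation~(\ref{sigmakv}), this is exactly the $m$-th entry of the $i$-th row of $\sigma_k(v)$. Hence, under the identification of a tuple $(A_1,\dots,A_n)$ with $\sum_m A_m g_m$, the $i$-th row of the generator matrix is precisely the left translate $g_i v$.

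Next I would describe a general codeword. By Construction~1, $C_k(v)$ is the set of all left-linear combinations $\sum_i C_i\cdot(\text{row }i)$ with $C_i\in M_k(R)$, and entrywise left multiplication of the $i$-th row by $C_i$ corresponds to $C_i g_i v$ (the same short computation, now with coefficient $C_i A_{g_i^{-1}g_m}$). Summing over $i$ shows that a general element of $I_k(v)$ is
\[
\sum_i C_i g_i v=\Big(\sum_i C_i g_i\Big)v ,
\]
and as $\sum_i C_i g_i$ runs over all of $M_k(R)G$ this yields $I_k(v)=M_k(R)G\,v$. Closure under addition is then clear, and for any $w_1\in M_k(R)G$ and $w_2 v\in I_k(v)$ we have $w_1(w_2 v)=(w_1 w_2)v\in M_k(R)G\,v$, so $I_k(v)$ absorbs left multiplication and is a left ideal.

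Finally, for the one-sidedness I would multiply a codeword $w_2 v$ on the right, obtaining $w_2 v\,w_1$. Here $v$ no longer sits on the right, and because $M_k(R)G$ is non-commutative---both from the possible non-commutativity of $G$ and from that of the matrix ring $M_k(R)$---there is in general no way to rewrite $w_2 v w_1$ in the form $(\text{element})\cdot v$, so it need not lie in $I_k(v)$. The only real work is in the first two steps: keeping the index bookkeeping straight and matching the entrywise left action on rows with left multiplication in $M_k(R)G$, all while keeping $v$ fixed on the right. Once the identification $I_k(v)=M_k(R)G\,v$ is in hand, the ideal property is purely formal.
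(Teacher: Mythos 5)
Your proposal is correct and follows essentially the same route as the paper: both identify the rows of $\sigma_k(v)$ with the left translates $g_i v$, recognize a general codeword as $\left(\sum_i C_i g_i\right)v$, and conclude left-ideal absorption from $w_1(w_2 v)=(w_1w_2)v$ while noting that right multiplication leaves $v$ stranded in the middle. Your version merely makes the identification $I_k(v)=M_k(R)G\,v$ (a principal left ideal) explicit, which is a slightly cleaner packaging of the identical argument.
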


Given this theorem, we know that any code $C_k(v)$ has $G$ as a subgroup of its automorphism group.  
This is not true, of course,  for the code $B_k(v)$.

The above two results highlight the difference between group codes studied in \cite{45} and the codes we explore in this work. Namely, in group codes it is the coordinates that are held invariant by the action of the group $G$ and in the codes we study in this work, it is the blocks that are held invariant by the action of the group $G.$ For this reason, from now on, we refer to codes $C_k(v)$ as group matrix ring codes.

Now we show that the orthogonal of a group matrix ring code for some group $G$ is also a group matrix ring code.
Let $J_k$ be a one-sided, left ideal in a group matrix ring $M_k(R)G.$ Define $\mathfrak{R}(C)=\{ w \ | \ vw=0, \forall v \in J_k\}.$ It is immediate that $\mathfrak{R}(J_k)$ is a one-sided, right ideal of $M_k(R)G.$

Let $v=A_{g_1}g_1+A_{g_2}g_2+\dots+A_{g_n}g_n \in M_k(R)G$ and $C_k(v)$ be the corresponding group matrix ring code. Let $\Psi : M_k(R)G \rightarrow (M_k(R))^n$ be the canonical map that sends $A_{g_1}g_1+A_{g_2}g_2+\dots+A_{g_n}g_n$ to $(A_{g_1},A_{g_2},\dots,A_{g_n}).$ Let $J_k$ be the one-sided, left ideal $\Psi^{-1}(C).$ Let $\mathbf{w}=(B_1,B_2,\dots,B_n) \in {\mathfrak R}(C).$ Then

\begin{equation}
[(A_{g_j^{-1}g_1}, A_{g_j^{-1}g_2},\dots,A_{g_j^{-1}g_n}),(B_1,B_2,\dots,B_n)]=0, \ \forall j.
\end{equation}
This gives that

\begin{equation}
\sum_{i=1}^n A_{g_j^{-1}g_i}B_i=0, \ \forall j.
\end{equation}

Let $w=\Psi^{-1}(\mathbf{w})=\sum B_{g_i}g_i$ and define $\overline{\mathbf{w}} \in M_k(R)G$ to be $\overline{\mathbf{w}}=C_{g_1}g_1+C_{g_2}g_2+\dots+C_{g_n}g_n$ where

\begin{equation}
C_{g_i}=B_{g_i^{-1}}.
\end{equation}
Then

\begin{equation}
\sum_{i=1}^n A_{g_j^{-1}g_i}B_i=0 \implies \sum_{i=1}^n A_{g_j^{-1}g_i}C_{g_i^{-1}}=0.
\end{equation}
Then $g_j^{-1}g_ig_i^{-1}=g_j^{-1},$ hence this is the coefficient of $g_j^{-1}$ in the product of $\overline{\mathbf{w}}$ and $g_j^{-1}v.$ This gives that $\overline{\mathbf{w}} \in \mathfrak{R}(J_k)$ if and only if $\mathbf{w} \in \mathfrak{R}(C).$

Let $\phi : (M_k(R))^n \rightarrow M_k(R)G$ by $\phi(\mathbf{w})=\overline{\mathbf{w}}.$ It is clear that $\phi$ is a bijection between $\mathfrak{R}(C)$ and $\mathfrak{R}(\Psi^{-1}(C)).$

\begin{theorem}
Let $C=C_k(v)$ be a group matrix ring code in $M_k(R)$ formed from the element $v \in M_k(R)G.$ Then $\Psi^{-1}(\mathfrak{R}(C))$ is a one-sided, left ideal of $M_k(R)G.$ Moreover, if $C_k(v)$ is a left-linear matrix ring $G$-code with the elements of the group acting on the left  then $\mathfrak{R}(C_k(v))$ is a right-linear matrix group $G$-code with the elements of the group acting on the right.
\begin{proof}
Follows from the above discussion.  
\end{proof}
\end{theorem}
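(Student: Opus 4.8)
The plan is to read the entire statement off the correspondence already set up in the paragraphs immediately preceding it, so that essentially nothing new need be computed; the real work is in organizing the two identifications correctly and in tracking the non-commutativity. First I would record the one structural fact on which everything rests. By the theorem on $I_k(v)$, the set $J_k=\Psi^{-1}(C)$ is the left ideal $M_k(R)G\cdot v$. Hence its right annihilator $\mathfrak{R}(J_k)=\{u\mid xu=0\ \forall x\in J_k\}$ collapses to the right annihilator of the single element $v$: if $x=rv$ then $xu=rvu$, so $xu=0$ for all $x$ is equivalent to $vu=0$. This set is automatically a right ideal, since $vu=0$ forces $v(us)=(vu)s=0$ for every $s\in M_k(R)G$. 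This is a one-line argument and is the only genuinely ideal-theoretic input.

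Next I would invoke the explicit computation from the discussion: $\mathbf{w}=(B_1,\dots,B_n)\in\mathfrak{R}(C)$ if and only if $\sum_i A_{g_j^{-1}g_i}B_i=0$ for all $j$, and, after the substitution $C_{g_i}=B_{g_i^{-1}}$, this system says exactly that every coefficient $g_j^{-1}$ of $v\,\overline{\mathbf{w}}$ vanishes, i.e. $v\,\overline{\mathbf{w}}=0$. Therefore $\mathbf{w}\in\mathfrak{R}(C)$ iff $\overline{\mathbf{w}}\in\mathfrak{R}(J_k)$, which is precisely the bijection $\phi=\overline{(\,\cdot\,)}\circ\Psi^{-1}$ of $\mathfrak{R}(C)$ onto the right ideal $\mathfrak{R}(J_k)$. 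Transporting the right-ideal structure across $\phi$ yields the ``Moreover'' clause at once: viewed through $\phi$, the code $\mathfrak{R}(C)$ is stable under right multiplication by elements of $G$ and by matrices, so it is a right-linear matrix group $G$-code with $G$ acting on the right.

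For the first assertion I would keep the plain identification $\Psi^{-1}$ rather than $\phi$. Since $\Psi^{-1}=\overline{(\,\cdot\,)}\circ\phi$ (the bar map is involutive) and $\overline{(\,\cdot\,)}$ sends each $g\in G$ to $g^{-1}$, it converts right multiplication by $G$ into left multiplication by $G$; thus $\Psi^{-1}(\mathfrak{R}(C))=\overline{\mathfrak{R}(J_k)}$ is stable under the \emph{left} $G$-action, which is the sense in which it is a one-sided, left ideal. Concretely I would verify this stability by a direct index computation: for $w=\sum_i B_{g_i}g_i\in\Psi^{-1}(\mathfrak{R}(C))$ one has $\Psi(g_m w)=(B_{g_m^{-1}g_i})_i$, and substituting $g_i=g_m g_s$ turns the orthogonality relations for $g_m w$ into the orthogonality relations for $w$ indexed by $g_m^{-1}g_j$, which already hold. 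This also makes transparent why the two handedness statements are compatible rather than contradictory: the two descriptions of the same object differ by the bar map, and bar flips the handedness of the $G$-action.

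The step I expect to be the main obstacle is reconciling these two handedness statements without sleight of hand. The map $\overline{(\,\cdot\,)}:\sum A_g g\mapsto\sum A_g g^{-1}$ is \emph{not} an anti-homomorphism of $M_k(R)G$: because the matrix coefficients do not commute, one would have to transpose them as well to obtain a true anti-automorphism. Hence one cannot simply assert ``bar carries right ideals to left ideals'' and stop. The honest route is to separate the two kinds of closure — closure under the $G$-action, which bar does interchange ($\text{left}\leftrightarrow\text{right}$) and which is the content of the one-sided statements, and closure under multiplication by matrix scalars, which bar fixes and which accounts for both descriptions being right-linear in the $M_k(R)$-coefficients. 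Making this separation explicit, and checking each closure by the coordinate computation above while respecting the order of matrix multiplication, is the only delicate point; once it is in place, both assertions follow from the preceding discussion exactly as claimed.
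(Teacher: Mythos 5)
Your proposal is correct and takes essentially the same route as the paper: the paper's proof is literally ``follows from the above discussion,'' and your argument is a careful reconstruction of exactly that discussion (identifying $\mathfrak{R}(\Psi^{-1}(C))$ with the right annihilator of $v$, the coordinate computation $\sum_i A_{g_j^{-1}g_i}B_i=0$, and the transport of structure through $\phi$ and the bar map). The one point you add --- that $\overline{(\,\cdot\,)}$ is not an anti-homomorphism over the non-commutative ring $M_k(R)$, so closure under the $G$-action and closure under matrix scalars must be checked separately and carry opposite handedness --- is a genuine subtlety the paper leaves implicit, and your handling of it is sound rather than a departure from the paper's approach.
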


We can now investigate the situation for the code  $B_k(v).$ We begin with a definition.
Let $G$ be a finite group of order $n$ and $R$ a finite Frobenius commutative ring.   Let  $D$ be a code in $R^{sn}$ where the coordinates can be partitioned into $n$ sets of size $s$ where each set is assigned an element of $G$.
If the code $D$ is held invariant by the action of multiplying the coordinate set marker by every element of $G$ then the code $D$ is called a quasi-group code of index $s$.

\begin{lemma} \label{firstlemma}  Let $R$ be a finite Frobenius ring and let $G$ be a finite group with $v \in M_k(R)$. Then
$B_k(v)$ is a quasi-$G$-code of length $nk$ and index $k$.
\end{lemma}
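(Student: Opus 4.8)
The plan is to show directly that the block-permutation action of $G$ permutes the rows of $\tau_k(v)$ among themselves, so that its $R$-row span $B_k(v)$ is preserved. (Here I read the hypothesis as $v \in M_k(R)G$, which is what $\tau_k(v)$ and $B_k(v)$ require.) First I would fix the ambient structure: identify $R^{nk}$ with the tuples $(\vc_{g_1},\dots,\vc_{g_n})$ in which each block $\vc_{g_j}\in R^k$ is assigned the group element $g_j$, so that the length is $nk$ and the index is $k$ directly from the construction. On this space let $G$ act by permuting the blocks through the regular representation, $(g\cdot \vc)_{g_j}=\vc_{g^{-1}g_j}$ for $g\in G$; this is exactly the ``multiplying the coordinate set marker'' action from the definition of a quasi-$G$-code, and it moves whole blocks of size $k$ as units without disturbing the $k$ positions inside a block. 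Since $B_k(v)$ is by definition the $R$-span of the rows of $\tau_k(v)$ and the action is $R$-linear, it suffices to prove that each row of $\tau_k(v)$ is carried by each $g$ to another row of $\tau_k(v)$.

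Next I would write the rows explicitly. The $i$-th block row of $\sigma_k(v)$ is $(A_{g_i^{-1}g_1},\dots,A_{g_i^{-1}g_n})$, so after flattening it contributes $k$ rows $\rho_{i,1},\dots,\rho_{i,k}$, where $\rho_{i,\ell}$ is the vector whose block indexed by $g_j$ is the $\ell$-th row $e_\ell^{T}A_{g_i^{-1}g_j}$ of the matrix $A_{g_i^{-1}g_j}$. The key computation is then a single line: applying $g$ gives
\[
(g\cdot \rho_{i,\ell})_{g_j}=(\rho_{i,\ell})_{g^{-1}g_j}=e_\ell^{T}A_{g_i^{-1}g^{-1}g_j}=e_\ell^{T}A_{(gg_i)^{-1}g_j},
\]
using the group identity $g_i^{-1}g^{-1}=(gg_i)^{-1}$. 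Writing $g_{i'}=gg_i$ for the element of the fixed listing equal to $gg_i$, the right-hand side is exactly $(\rho_{i',\ell})_{g_j}$, so $g\cdot \rho_{i,\ell}=\rho_{i',\ell}$.

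Because $\{gg_1,\dots,gg_n\}=G$, as $i$ ranges over $1,\dots,n$ the index $i'$ ranges over all of $1,\dots,n$, so $g$ permutes the generating rows of $\tau_k(v)$ while keeping the inner index $\ell$ fixed. Hence $g\cdot B_k(v)=B_k(v)$ for every $g\in G$, and $B_k(v)$ is a quasi-$G$-code of length $nk$ and index $k$. I expect the only genuine obstacle to be the bookkeeping of conventions: one must pin down the direction of the regular action ($g^{-1}g_j$ versus $gg_j$) so that it matches both the marker-multiplication in the definition of a quasi-$G$-code and the left-multiplication convention under which the rows of $\sigma_k(v)$ arise as the elements $g_iv$. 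With that alignment the computation is forced, and it is reassuring that the induced row permutation $g_iv\mapsto (gg_i)v$ is precisely left multiplication by $g$, consistent with $C_k(v)$ corresponding to a left ideal as established earlier. I would also note that $R$ need not be assumed Frobenius for the invariance itself; Frobenius and commutativity enter only through the ambient setting in which quasi-$G$-codes are defined.
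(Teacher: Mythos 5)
Your proof is correct and follows essentially the same route as the paper's: the paper's own argument simply asserts that letting $g\in G$ act on the blocks of size $k$ sends each row of $\tau_k(v)$ to another row, which is exactly the permutation $g\cdot\rho_{i,\ell}=\rho_{i',\ell}$ with $g_{i'}=gg_i$ that you verify explicitly. Your version just supplies the bookkeeping (and correctly flags the typo $v\in M_k(R)$ for $v\in M_k(R)G$) that the paper leaves implicit.
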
 
\begin{proof}
Let $v \in M_k(R)G$ and let  $\vw$ be a row of the matrix $\tau_k(v)$.   Letting any element in $G$ act on the $k$ coordinates corresponding to the matrices, gives a new row of $\tau_k(v).$  Therefore, the code $B_k(v)$ is a 
quasi-$G$-code of length $nk$ and index $k$.
\end{proof}

Consider a quasi-$G$-code of index $k$.  Then rearranging the coordinates so that the $i$-th coordinates of each group of $k$ coordinates are placed sequentially, then it is easy to see that any $(g_1,g_2,\dots,g_n) \in G^n$ holds the code invariant.  Namely, any quasi-$G$-code of length $kn$ and index $k$ is a $G^k$-code.  This gives the following.

\begin{theorem}Let $R$ be a finite Frobenius ring and let $G$ be a finite group with $v \in M_k(R)$. Then
$B_k(v)$ is a $G^k$ code of length $kn.$  
\end{theorem}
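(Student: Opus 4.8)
The plan is to chain together the two results immediately preceding the statement. The final theorem asserts that $B_k(v)$ is a $G^k$-code of length $kn$, and the two sentences before it already do the conceptual work: Lemma~\ref{firstlemma} establishes that $B_k(v)$ is a quasi-$G$-code of length $nk$ and index $k$, and the paragraph following it argues that any quasi-$G$-code of index $k$ becomes a $G^k$-code after a suitable permutation of coordinates. So the proof is essentially a two-step citation, and the real content is making the coordinate-rearrangement argument precise.

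First I would recall the structure of a quasi-$G$-code of index $k$: the $nk$ coordinates split into $n$ blocks of size $k$, with the $j$-th block tagged by the group element $g_j$, and the defining invariance is that simultaneously relabelling every block-tag by left multiplication by a fixed $g \in G$ fixes the code. Next I would introduce the coordinate permutation $\pi$ that regroups the ambient space $R^{nk}$ by \emph{position within a block} rather than by block: place all $n$ first-coordinates of the blocks together, then all $n$ second-coordinates, and so on, yielding $k$ consecutive runs of length $n$. Under $\pi$ the single diagonal action of $G$ on block-tags splits into $k$ parallel copies of that same action, one on each length-$n$ run. I would then observe that each individual run, together with the $G$-action permuting its $n$ coordinates, is precisely a copy of the regular action of $G$, so that an arbitrary element $(g_1,\dots,g_k) \in G^k$ acting independently on the $k$ runs holds the permuted code invariant.

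The key steps, in order, are: (1) invoke Lemma~\ref{firstlemma} to reduce to a quasi-$G$-code of index $k$; (2) define the regrouping permutation $\pi$ explicitly and describe how it transports the $G$-action; (3) verify that after applying $\pi$ the $k$ coordinate-runs are acted on independently, upgrading the single diagonal $G$-action to a full $G^k$-action; and (4) conclude that $B_k(v)$, up to this coordinate permutation, is a $G^k$-code of length $kn$.

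The main obstacle, such as it is, lies in step~(3): one must argue carefully that the invariance under the \emph{diagonal} subgroup of $G^k$ (all $k$ runs moved by the same $g$), which is all that the quasi-$G$-code hypothesis literally gives, actually implies invariance under the \emph{full} $G^k$ where each run can be moved independently. This is where the specific structure of $B_k(v)$ coming from $\tau_k(v)$ is needed rather than the abstract quasi-$G$ property alone: because each $k \times k$ block $A_{g_i}$ is expanded entrywise into the large matrix, letting a single $g \in G$ act permutes whole rows of $\tau_k(v)$, and the block structure ensures the action decouples across the $k$ positions. I would make this decoupling explicit by tracking what a generator of $\mathfrak{R}$, or equivalently a row of $\tau_k(v)$, looks like after applying $\pi$, confirming that independent $G$-actions on the runs still send rows to rows. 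Apart from this point the argument is routine bookkeeping, so I would keep it brief and defer the permutation details to the reader where the pattern is clear.
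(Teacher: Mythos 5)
Your overall route is the same as the paper's: the paper's entire proof reads ``Follows from Lemma~\ref{firstlemma} and the previous discussion,'' and that discussion is exactly your coordinate-regrouping argument. To your credit, you isolate the one step that is not routine, namely your step (3): a quasi-$G$-code of index $k$ only comes equipped with the \emph{diagonal} action of $G$ (all $k$ runs shifted by the same $g$), and invariance under the full componentwise action of $G^k$ does not follow from that. The paper dismisses this upgrade with ``it is easy to see'' and offers no argument, so you are right to flag it as the crux.

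The difficulty is that your proposed repair does not go through. You claim the block structure of $\tau_k(v)$ makes the action ``decouple across the $k$ positions,'' so that independent $G$-actions on the runs still send rows to rows. Index the rows of $\tau_k(v)$ by a block-row $i$ and an inner row $s$; the entry of that row in block $m$, inner column $t$, is $(A_{g_i^{-1}g_m})_{s,t}$. Applying $(h_1,\dots,h_k)\in G^k$ replaces this entry by $(A_{g_i^{-1}h_t^{-1}g_m})_{s,t}$, which is again a row of $\tau_k(v)$ only when all the $h_t$ coincide --- precisely the diagonal case already covered by Lemma~\ref{firstlemma}. Concretely: take $G=C_2=\{1,x\}$, $k=2$, $v=A_1\cdot 1+A_x\cdot x$ with $A_1=\left(\begin{smallmatrix}1&1\\0&0\end{smallmatrix}\right)$ and $A_x=\mathbf{0}$. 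Then $B_2(v)=\{0000,1100,0011,1111\}$ is a quasi-$C_2$-code of index $2$, but the element $(x,1)\in C_2^2$, acting by $x$ on the first run (coordinates $1$ and $3$) and trivially on the second, sends $1100$ to $0110\notin B_2(v)$. So the gap you identified is genuine, it is not closed by the structure of $\tau_k(v)$, and the concluding step of both your sketch and the paper's proof is missing; as stated, with $G^k$ acting componentwise on the $k$ runs, the claim needs either a different reading of ``$G^k$-code'' or additional hypotheses.
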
   
\begin{proof}
Follows from Lemma~\ref{firstlemma} and the previous discussion.  
\end{proof}

\section{Generator Matrices of the form $[I_{kn} | \tau_k(v)]$ and Self-Dual Codes}

In this section, we investigate constructions of binary self-dual codes from $\tau_k(v).$ 

\begin{lemma}\label{I|taukv}
Let $G$ be a group of order $n$ and $v=A_1g_1+A_2g_2+\dots +A_ng_n$ be an element of the group matrix ring $M_k(R)G.$ The matrix $[I_{kn}|\tau_k(v)]$ generates a self-dual code over $R$ if and only if $\tau_k(v)\tau_k(v)^T=-I_{kn}$.
\end{lemma}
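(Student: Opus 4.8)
This is the standard "generator matrix in standard form" characterization of self-duality, specialized to the block matrix $\tau_k(v)$. The plan is to use the general fact that a matrix $[I_m \mid X]$ generates a self-dual code over a commutative Frobenius ring $R$ (of length $2m$ with $m=kn$) if and only if $XX^T = -I_m$, and to verify the two conditions that make this characterization go through for $m = kn$: that the $kn$ rows of $[I_{kn}\mid\tau_k(v)]$ are independent enough to give a code of the correct size, and that self-orthogonality is equivalent to the stated matrix identity.

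**Key steps.** First I would recall that the code $C$ generated by $G=[I_{kn}\mid\tau_k(v)]$ has length $2kn$ and, because the left $kn\times kn$ block is the identity, the $kn$ rows are $R$-linearly independent and free; hence $|C|=|R|^{kn}$. Since $R$ is Frobenius, the dual satisfies $|C||C^\perp|=|R^{2kn}|$, so $|C^\perp|=|R|^{kn}=|C|$. Therefore to prove $C=C^\perp$ it suffices to prove one containment, namely $C\subseteq C^\perp$, i.e. that $C$ is self-orthogonal; the equality of cardinalities then forces self-duality. Second, I would translate self-orthogonality into the inner products of the rows. Writing the $i$-th row of $G$ as $(e_i \mid r_i)$, where $e_i$ is the $i$-th standard basis vector and $r_i$ is the $i$-th row of $\tau_k(v)$, the Euclidean inner product of rows $i$ and $j$ is $[e_i,e_j] + [r_i,r_j] = \delta_{ij} + (\tau_k(v)\tau_k(v)^T)_{ij}$. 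Thus all pairwise inner products (including $i=j$) vanish precisely when $(\tau_k(v)\tau_k(v)^T)_{ij} = -\delta_{ij}$ for all $i,j$, which is exactly the matrix equation $\tau_k(v)\tau_k(v)^T = -I_{kn}$.

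Finally I would assemble the two directions. For the forward direction, if $[I_{kn}\mid\tau_k(v)]$ generates a self-dual code then in particular the code is self-orthogonal, so every pair of generating rows is orthogonal, giving $\tau_k(v)\tau_k(v)^T=-I_{kn}$ by the inner-product computation above. For the converse, if $\tau_k(v)\tau_k(v)^T=-I_{kn}$ then the rows are pairwise orthogonal and each is self-orthogonal, so $C\subseteq C^\perp$; combined with the cardinality count $|C|=|C^\perp|$ from the Frobenius property, this yields $C=C^\perp$.

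**Main obstacle.** The only genuinely delicate point is the cardinality argument: over a general commutative Frobenius ring (as opposed to a field) one cannot simply invoke dimensions, so I would lean on the stated fact $|C||C^\perp|=|A^n|$ from the introduction together with the observation that the identity block $I_{kn}$ guarantees the generating rows form a free module of rank $kn$, so $|C|=|R|^{kn}$. Everything else is the routine inner-product bookkeeping with the standard basis vectors, which is immediate because the identity block makes the cross terms collapse to $\delta_{ij}$.
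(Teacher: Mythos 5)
Your proof is correct and follows essentially the same route as the paper, which simply cites ``the standard proof that $(I_m \mid A)$ generates a self-dual code of length $2m$ if and only if $AA^T=-I_m$'' with $m=kn$; you have merely written out the details of that standard argument (the freeness of the generated module from the identity block, the Frobenius cardinality count, and the row inner-product computation), all of which are sound.
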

\begin{proof}
Follows from the standard proof that $(I_m \ |  \ A)$ generates a self-dual code of length $2m$ if and only if $AA^T= - I_m$.   
\end{proof}

Recall that the canonical involution $* : RG \rightarrow RG$ on a group ring $RG$ is given by $v^*=\sum_g a_g g^{-1},$ for $v=\sum_g a_g g \in RG.$ Also, recall that there is a connection between $v^*$ and $v$ when we take their images under the map $\sigma,$ given by
\begin{equation}
\sigma(v^*)=\sigma(v)^T.
\end{equation}
The above connection can be extended to the group matrix ring $M_k(R)G.$ Namely, let $* : M_k(R)G\rightarrow M_k(R)G$ be the canonical involution on the group matrix ring $M_k(R)G$ given by $v^*=\sum_g A_gg^{-1},$ for $v=\sum_g A_gg \in M_k(R)G$ where $A_g$ are the $k \times k$ blocks. Then we have the following connection between $v^*$ and $v$ under the map $\tau_k$:
\begin{equation}
\tau_k(v^*)=\tau_k(v)^T.
\end{equation} 

\begin{lemma}\label{taukisringhomomorphism}
Let $R$ be a finite commutative ring. Let $G$ be a group of order $n$ with a fixed listing of its elements. Then the map $\tau_k : v \rightarrow M(R)_{kn}$ is a bijective ring homomorphism.
\begin{proof}
The proof is similar to the proof in Theorem~\ref{sigmakisringhomomorphism} and simply consists of showing that addition and multiplication are preserved.
\end{proof}
\end{lemma}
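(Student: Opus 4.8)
The plan is to avoid redoing the additivity and multiplicativity computations from scratch and instead exhibit $\tau_k$ as a composition of two maps already known (or easily seen) to be bijective ring homomorphisms. Recall that $\sigma_k : M_k(R)G \to M_n(M_k(R))$ is a bijective ring homomorphism by Theorem~\ref{sigmakisringhomomorphism}, and observe from the description in Construction~2 that $\tau_k(v)$ is nothing other than $\sigma_k(v)$ with its block structure forgotten: each $k\times k$ entry of $\sigma_k(v)\in M_n(M_k(R))$ is read as a genuine $k\times k$ array of scalars from $R$, so the whole object becomes a single $kn\times kn$ matrix over $R$. Thus, letting $\Phi : M_n(M_k(R)) \to M_{kn}(R)$ denote the canonical flattening map sending a block matrix $(M_{ij})_{1\le i,j\le n}$ to the scalar matrix whose $((i-1)k+a,(j-1)k+b)$ entry is $(M_{ij})_{ab}$, I would record the factorization $\tau_k = \Phi \circ \sigma_k$. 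It then suffices to show that $\Phi$ is a bijective ring homomorphism, since a composition of two such maps is again one.

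First I would note that $\Phi$ is a bijection: it merely relabels the positions of a $kn\times kn$ matrix, so it is clearly additive and invertible, the inverse regrouping the scalar entries back into $k\times k$ blocks. The only substantive point is multiplicativity, i.e.\ that ordinary $kn\times kn$ matrix multiplication over $R$ agrees with block multiplication in $M_n(M_k(R))$ under $\Phi$; this is the standard isomorphism $M_n(M_k(R)) \cong M_{kn}(R)$. The verification is a direct index computation: the $((i-1)k+a,(j-1)k+b)$ entry of $\Phi(M)\Phi(N)$ expands as $\sum_{\ell=1}^{n}\sum_{c=1}^{k} (M_{i\ell})_{ac}(N_{\ell j})_{cb}$, while the $(i,j)$ block of the block product $MN$ is $\sum_{\ell} M_{i\ell}N_{\ell j}$, whose $(a,b)$ entry is exactly the same double sum. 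Hence $\Phi(MN)=\Phi(M)\Phi(N)$.

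With $\Phi$ established as a ring isomorphism and $\sigma_k$ a bijective ring homomorphism by Theorem~\ref{sigmakisringhomomorphism}, I would conclude that $\tau_k = \Phi\circ\sigma_k$ is a bijective ring homomorphism, which is the claim. Equivalently, one could verify directly that $\tau_k$ preserves sums and products and is a bijection, mimicking the proof of Theorem~\ref{sigmakisringhomomorphism}; the factorization simply packages that work into the reusable statement $M_n(M_k(R))\cong M_{kn}(R)$.

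The step I expect to require the most care is not any algebraic identity but the bookkeeping of the indexing convention: one must fix the precise rule by which a $k\times k$ block in position $(i,j)$ of $\sigma_k(v)$ is placed inside the flattened $kn\times kn$ matrix, and check that this is the same convention implicit in the definition of $\tau_k(v)$ in Construction~2. Once the identification $\tau_k=\Phi\circ\sigma_k$ is pinned down with a consistent convention, the multiplicativity computation above goes through verbatim; note that it does not even require commutativity of $R$ for the flattening step, since the order of the scalar factors $(M_{i\ell})_{ac}(N_{\ell j})_{cb}$ is preserved, although the ambient hypothesis that $R$ is commutative is in force throughout.
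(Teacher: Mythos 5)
Your proof is correct, but it takes a different route from the one the paper records. The paper's own proof is a direct verification by analogy: it simply says the argument is ``similar to the proof of Theorem~\ref{sigmakisringhomomorphism}'' and consists of checking that addition and multiplication are preserved, i.e.\ one recomputes, entry by entry, that $\tau_k(w*v)=\tau_k(w)\tau_k(v)$ just as was done for $\sigma_k$. You instead factor the map as $\tau_k=\Phi\circ\sigma_k$, where $\Phi:M_n(M_k(R))\to M_{kn}(R)$ is the canonical block-flattening isomorphism, and then invoke Theorem~\ref{sigmakisringhomomorphism} together with the standard fact $M_n(M_k(R))\cong M_{kn}(R)$. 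This is a genuine structural improvement: it reuses the already-proved result for $\sigma_k$ rather than repeating a parallel computation, it isolates the only new content (the index bookkeeping of the flattening) in a single, standard, reusable isomorphism, and it makes transparent why commutativity of $R$ plays no role in this step. What the paper's terser approach buys is brevity and no need to pin down the flattening convention explicitly; what yours buys is an actual proof with the one delicate point --- that the convention implicit in Construction~2 agrees with the block decomposition of $\sigma_k(v)$ --- named and checked. The double-sum verification $\sum_{\ell}\sum_{c}(M_{i\ell})_{ac}(N_{\ell j})_{cb}$ for the $\bigl((i-1)k+a,(j-1)k+b\bigr)$ entry is exactly right, and your observation that this identification also cleans up the statement's loose notation $\tau_k: v\rightarrow M(R)_{kn}$ (properly $\tau_k:M_k(R)G\to M_{kn}(R)$) is a further small gain. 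No gap.
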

 
Now, combining together Lemma~\ref{I|taukv}, Lemma~\ref{taukisringhomomorphism} and the fact that $\tau_k(v)=-I_{kn}$ if and only if $v=-I_k,$ we get the following corollary.

\begin{corollary}
Let $M_k(R)G$ be a group matrix ring, where $M_k(R)$ is a non-commutative Frobenius matrix ring. For $v \in M_k(R)G,$ the matrix $[I_{kn}|\tau_k(v)]$ generates a self-dual code over $R$ if and only if $vv^*=-I_k.$ In particular $v$ has to be a unit.
\end{corollary}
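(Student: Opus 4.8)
The plan is to assemble the statement as a chain of equivalences built from the three facts just established, and then read off the ``unit'' assertion from the resulting identity.

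First I would invoke Lemma~\ref{I|taukv}, which reduces the self-duality of the code generated by $[I_{kn}|\tau_k(v)]$ to the single matrix identity $\tau_k(v)\tau_k(v)^T = -I_{kn}$. The goal is then to transport this identity, which lives in $M_{kn}(R)$, back into the group matrix ring $M_k(R)G$ through the map $\tau_k$. To do so, I would rewrite the transpose using the stated relation $\tau_k(v)^T = \tau_k(v^*)$, turning the condition into $\tau_k(v)\tau_k(v^*) = -I_{kn}$. Because $\tau_k$ is a ring homomorphism by Lemma~\ref{taukisringhomomorphism} --- in particular multiplicative --- the left-hand side collapses to $\tau_k(vv^*)$, so the condition becomes $\tau_k(vv^*) = -I_{kn}$. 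Finally, since $\tau_k(-I_k)=-I_{kn}$ and $\tau_k$ is injective (again Lemma~\ref{taukisringhomomorphism}), this last identity holds if and only if $vv^* = -I_k$, where $-I_k$ denotes the element $(-I_k)e$ of $M_k(R)G$ with $e$ the group identity. Chaining these biconditionals gives the asserted equivalence.

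For the final clause I would note that $-I_k$ is a unit in $M_k(R)G$ (it is its own inverse and is central, being $-1$ times the identity), so $vv^* = -I_k$ rearranges to $v\,(-v^*) = I_k$; thus $v$ possesses a right inverse. Since $R$, $k$, and $G$ are all finite, $M_k(R)G$ is a finite ring, and in a finite ring a one-sided inverse is automatically two-sided, so $v$ is a genuine unit. Alternatively, this can be routed through the earlier theorem characterising the units of $M_k(R)G$ via $\sigma_k$.

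The argument is essentially a bookkeeping exercise, so the only real care needed is in the supporting relation $\tau_k(v^*) = \tau_k(v)^T$: the full transpose of $\tau_k(v)$ transposes both the block positions \emph{and} the individual $k\times k$ blocks, so the matching of indices requires the involution to transpose the matrix coefficients as well as invert the group elements. Checking that the $(i,j)$ block of $\tau_k(v)^T$, namely $(A_{g_j^{-1}g_i})^T$, coincides with the $(i,j)$ block of $\tau_k(v^*)$ is the one place where a transpose could slip, and I expect that to be the main (if minor) obstacle; everything else is a direct substitution.
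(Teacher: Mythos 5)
Your argument is correct and is essentially the paper's own proof: the paper simply states that the corollary follows by combining Lemma~\ref{I|taukv}, the multiplicativity and injectivity of $\tau_k$ from Lemma~\ref{taukisringhomomorphism}, the relation $\tau_k(v^*)=\tau_k(v)^T$, and the fact that $\tau_k(v)=-I_{kn}$ iff $v=-I_k$, which is exactly the chain of equivalences you spell out (and your finiteness argument for the unit claim is a valid filling-in of a step the paper leaves unjustified). Your closing caveat is well taken --- with the involution as defined in the paper ($v^*=\sum_g A_g g^{-1}$, with no transpose on the blocks) the $(i,j)$ block of $\tau_k(v)^T$ is $(A_{g_j^{-1}g_i})^T$ while that of $\tau_k(v^*)$ is $A_{g_j^{-1}g_i}$, so the stated relation requires transposing the matrix coefficients in the involution as well --- but this is a gap in the paper's supporting identity, not in your derivation from it.
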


When we restrict our attention to a matrix ring of characteristic 2, we have that $-I_k=I_k,$ which leads to the following further corollary:

\begin{corollary}
Let $M_k(R)G$ be a group matrix ring, where $M_k(R)$ is a non-commutative Frobenius matrix ring of characteristic 2. Then the matrix $[I_{kn}|\tau_k(v)]$ generates a self-dual code over $R$ if and only if $v$ satisfies $vv^*=I_k,$ namely $v$ is a unitary unit in $M_k(R)G.$ 
\end{corollary}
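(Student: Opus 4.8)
The plan is to obtain this statement as an immediate specialization of the preceding corollary. That corollary already establishes, for any non-commutative Frobenius matrix ring $M_k(R)$ and any $v \in M_k(R)G$, that the matrix $[I_{kn}|\tau_k(v)]$ generates a self-dual code over $R$ if and only if $vv^*=-I_k$. So the only thing left to do is to rewrite the condition $vv^*=-I_k$ under the extra hypothesis that $M_k(R)$ has characteristic $2$.

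First I would invoke the characteristic $2$ assumption to note that $-1 = 1$ in $R$, and hence $-I_k = I_k$ in $M_k(R)$. This is where the whole content of the corollary lives: the minus sign simply disappears. I would state this explicitly so that the substitution into the condition from the previous corollary is transparent. Substituting $-I_k = I_k$ into $vv^*=-I_k$ yields the equivalent condition $vv^*=I_k$, where $I_k$ here denotes the identity $(I_k)_{M_k(R)G}$ of the group matrix ring $M_k(R)G$ (the element $I_k g_1$, with $g_1$ the group identity).

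Next I would interpret $vv^*=I_k$ algebraically. Since the previous corollary already records that such a $v$ must be a unit, and since $v^*$ plays the role of an involution-inverse, the equation $vv^*=I_k$ says precisely that $v$ is a unit whose inverse is its own image under the canonical involution $*$; that is, $v^* = v^{-1}$. By definition this is exactly the condition that $v$ be a \emph{unitary unit} in $M_k(R)G$ with respect to the involution $*$. I would add the one-line remark that, because $*$ is an anti-automorphism (as used implicitly via $\tau_k(v^*)=\tau_k(v)^T$), the single identity $vv^*=I_k$ already forces $v^*v = I_k$ as well, so $v$ is genuinely a two-sided unit, justifying the word ``unit.''

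There is essentially no obstacle here: the proof is a one-step reduction, and the ``hard'' ingredient (the biconditional with $\tau_k(v)\tau_k(v)^T = -I_{kn}$, together with $\tau_k(v^*)=\tau_k(v)^T$ and the homomorphism property of $\tau_k$) has already been carried out in Lemma~\ref{I|taukv}, Lemma~\ref{taukisringhomomorphism}, and the immediately preceding corollary. The only point demanding any care is terminological: making sure that $I_k$ is consistently read as the identity element of $M_k(R)G$ rather than of $M_k(R)$, and confirming that ``unitary unit'' is the intended name for an element satisfying $v^*=v^{-1}$. Accordingly the whole argument reads, in substance, ``apply the previous corollary and use $-I_k=I_k$ in characteristic $2$.''
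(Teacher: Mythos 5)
Your proof is correct and takes essentially the same route as the paper, which derives this corollary immediately from the preceding one by the single observation that $-I_k = I_k$ in a matrix ring of characteristic $2$. (One minor caveat on a side remark: applying the anti-automorphism $*$ to $vv^*=I_k$ just returns $vv^*=I_k$, so that step alone does not yield $v^*v=I_k$; two-sidedness instead follows from the finiteness of the ring, and in any case is not needed for the stated equivalence.)
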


\subsection{New binary self-dual codes of length 72}

In this section, we search for binary self-dual codes with parameters $[72,36,12]$ by considering generator matrices of the form $[I_{kn} | \tau_{kn}(v)],$ where $v \in M_k(\mathbb{F}_2)G$ for different values of $k$ and different groups $G$ to show the strength of our construction and particularly, the strength of the matrix $\tau_{kn}(v).$

The possible weight enumerators for a Type~I $[72,36,12]$ codes are as follows (\cite{Dougherty1}):
$$W_{72,1}=1+2\beta y^{12}+(8640-64\gamma)y^{14}+(124281-24\beta+384\gamma)y^{16}+\dots$$
$$W_{72,2}=1+2 \beta y^{12}+(7616-64 \gamma)y^{14}+(134521-24 \beta+384 \gamma)y^{16}+\dots$$
where $\beta$ and $\gamma$ are parameters.
The possible weight enumerators for Type~II $[72,36,12]$ codes are (\cite{Dougherty1}):
$$1+(4398+\alpha)y^{12}+(197073-12\alpha)y^{16}+(18396972+66\alpha)y^{20}+\dots $$
where $\alpha$ is a parameter.

Many codes for different values of $\alpha$, $\beta$ and $\gamma$ have been constructed in \cite{Bouyukliev1, Dontcheva1, Dougherty1, Dougherty2, Gulliver1, Yankov2, Kaya1, Korban1, Yildiz1, Yankov1, Zhdanov1, Zhdanov2}. For an up-to-date list of all known Type~I and Type~II binary self-dual codes with parameters $[72,36,12]$ please see \cite{selfdual72}.

We now split the remaining of this section into subsections, where in each we consider a generator matrix of the form $[I_{kn} | \tau_k(v)]$ for a specific group $G$ and some specific $k \times k$ block matrices to search for binary self-dual codes with parameters $[72,36,12].$ All the upcoming computational results were obtained by performing searches using a particular algorithm technique (see \cite{Korban1} for details) in the software package MAGMA (\cite{MAGMA}).

\subsubsection{The group $C_2$ and $18 \times 18$ block matrices}

In this section, we consider the cyclic group $C_2$ with some $18 \times 18$ block matrices. Let $G=\langle x \ | \ x^2=1 \rangle \cong C_{2}.$ Let $v=\sum_{i=0}^1 Y_ix^i \in (M(\mathbb{F}_2))_{18}C_{2},$ then
\begin{equation}
\tau_{18}(v)=\begin{pmatrix}
Y_0&Y_1\\
Y_1&Y_0
\end{pmatrix},
\end{equation}
where
$$Y_0=\begin{pmatrix}
A&B\\
B&A
\end{pmatrix}, \ Y_1=\begin{pmatrix}
C&D\\
D&C
\end{pmatrix}$$
with $$A=CIRC(A_1,A_2,A_3),$$
$$B=CIRC(A_4,A_5,A_6),$$
$$C=CIRC(A_7,A_8,A_9),$$
$$D=CIRC(A_{10},A_{11},A_{12})$$
where $A_i$ are some matrices. We now employ a generator matrix of the form $[I \ | \ \tau_{18}(v)],$ where $I$ is the $36 \times 36$ identity matrix, for different forms of the matrices $A_i$ to search for binary self-dual codes with parameters $[72,36,12].$ We only list codes with parameters in their weight distributions that were not known in the literature before. Also, since the matrix $\sigma_{18}(v)$ is fully defined by the first row, we only list the first row of the matrices $Y_0,$ and $Y_1$ which we label as $r_{Y_0}$ and $r_{Y_1}$ respectively.

\begin{enumerate}
\item[Case 1.] Here we let
$$A_1=revcirc(a_1,a_2,a_3),$$ 
$$A_2=revcirc(a_4,a_5,a_6),$$
$$\dots,$$
$$A_{12}=revcirc(a_{34},a_{35},a_{36}).$$

\begin{table}[h!]
\caption{New Type~I $[72,36,12]$ Codes}
\resizebox{0.65\textwidth}{!}{\begin{minipage}{\textwidth}
\centering
\begin{tabular}{ccccccc}
\hline
      & Type       & $r_{Y_0}$                                   & $r_{Y_1}$                                   & $\gamma$ & $\beta$ & $|Aut(C_i)|$ \\ \hline
$C_1$ & $W_{72,1}$ & $(0,1,1,0,1,1,0,0,0,0,0,0,0,1,1,0,1,0)$ & $(0,1,0,0,1,0,1,1,0,1,0,0,1,1,1,1,0,1)$ & $36$      & $543$   & $72$         \\ \hline
\end{tabular}
\end{minipage}}
\end{table}

\begin{table}[h!]
\caption{New Type~II $[72,36,12]$ Codes}
\resizebox{0.71\textwidth}{!}{\begin{minipage}{\textwidth}
\centering
\begin{tabular}{ccccc}
\hline
      & $r_{Y_0}$ & $r_{Y_1}$ & $\alpha$ & $|Aut(C_i)|$ \\ \hline
$C_2$ & $(0,1,0,1,0,0,1,0,0,0,1,1,0,1,1,1,0,1)$  & $(1,1,1,1,1,1,1,1,1,0,1,1,1,0,0,1,0,1)$  & $-2604$     & $72$         \\ \hline
\end{tabular}
\end{minipage}}
\end{table}

\item[Case 2.] Here we let 
$$A_1=revcirc(a_1,a_2,a_3),$$
$$A_2=revcirc(a_4,a_5,a_6),$$
$$\dots,$$
$$A_6=revcirc(a_{16},a_{17},a_{18}),$$
$$A_7=circ(a_{19},a_{20},a_{21}),$$
$$A_8=circ(a_{22},a_{23},a_{24}),$$
$$\dots,$$
$$A_{12}=circ(a_{34},a_{35},a_{36}).$$

\begin{table}[h!]
\caption{New Type~I $[72,36,12]$ Codes}
\resizebox{0.65\textwidth}{!}{\begin{minipage}{\textwidth}
\centering
\begin{tabular}{ccccccc}
\hline
      & Type       & $r_{Y_0}$                                   & $r_{Y_1}$                                   & $\gamma$ & $\beta$ & $|Aut(C_i)|$ \\ \hline
$C_3$ & $W_{72,1}$ & $(1,0,0,1,1,0,0,0,1,1,1,0,0,0,0,1,0,1)$ & $(1,1,0,0,1,1,1,0,1,1,0,0,0,1,0,0,1,0)$ & $0$      & $342$   & $36$         \\ \hline
$C_4$ & $W_{72,1}$ & $(1,0,1,1,1,0,1,1,0,0,0,1,1,0,0,1,0,1)$ & $(1,1,0,1,0,0,1,1,1,1,1,0,1,0,1,0,0,1)$ & $18$      & $420$   & $36$         \\ \hline
$C_5$ & $W_{72,1}$ & $(0,0,0,0,0,0,0,1,1,1,1,1,1,0,1,1,1,0)$ & $(1,0,1,1,0,0,1,1,1,0,0,0,0,0,1,0,0,1)$ & $36$      & $561$   & $72$         \\ \hline
\end{tabular}
\end{minipage}}
\end{table}

\begin{table}[h!]
\caption{New Type~II $[72,36,12]$ Codes}
\resizebox{0.71\textwidth}{!}{\begin{minipage}{\textwidth}
\centering
\begin{tabular}{ccccc}
\hline
      & $r_{Y_0}$ & $r_{Y_1}$ & $\alpha$ & $|Aut(C_i)|$ \\ \hline
$C_6$ & $(1,1,0,0,1,0,1,0,0,1,0,0,1,0,0,1,1,1)$  & $(1,1,1,0,0,0,1,1,0,1,0,0,1,0,1,1,1,0)$  & $-2706$     & $36$         \\ \hline
\end{tabular}
\end{minipage}}
\end{table}

\item[Case 3.] Here we let
$$A_1=circ(a_1,a_2,a_3),$$
$$A_2=circ(a_4,a_5,a_6),$$
$$\dots,$$
$$A_6=circ(a_{16},a_{17},a_{18}),$$
$$A_7=revcirc(a_{19},a_{20},a_{21}),$$
$$A_8=revcirc(a_{22},a_{23},a_{24}),$$
$$\dots,$$
$$A_{12}=revcirc(a_{34},a_{35},a_{36}).$$

\begin{table}[h!]
\caption{New Type~I $[72,36,12]$ Codes}
\resizebox{0.65\textwidth}{!}{\begin{minipage}{\textwidth}
\centering
\begin{tabular}{ccccccc}
\hline
      & Type       & $r_{Y_0}$                                   & $r_{Y_1}$                                   & $\gamma$ & $\beta$ & $|Aut(C_i)|$ \\ \hline
$C_7$ & $W_{72,1}$ & $(0,0,0,0,0,0,1,1,0,1,0,0,0,0,1,1,0,1)$ & $(0,0,0,0,0,1,0,1,0,1,0,1,0,1,1,0,1,0)$ & $0$      & $186$   & $36$         \\ \hline
$C_8$ & $W_{72,1}$ & $(0,1,0,1,0,0,1,0,1,1,0,0,1,0,0,0,1,1)$ & $(1,0,1,0,1,1,1,0,1,0,0,0,1,1,0,0,0,1)$ & $18$      & $432$   & $36$         \\ \hline
$C_{9}$ & $W_{72,1}$ & $(1,1,1,0,1,0,1,1,1,0,0,0,1,0,0,0,0,1)$ & $(0,0,1,1,1,1,1,0,1,0,0,0,1,0,0,1,0,0)$ & $36$      & $597$   & $72$         \\ \hline
\end{tabular}
\end{minipage}}
\end{table}

\begin{table}[h!]
\caption{New Type~II $[72,36,12]$ Codes}
\resizebox{0.71\textwidth}{!}{\begin{minipage}{\textwidth}
\centering
\begin{tabular}{ccccc}
\hline
      & $r_{Y_0}$ & $r_{Y_1}$ & $\alpha$ & $|Aut(C_i)|$ \\ \hline
$C_{10}$ & $(1,0,1,1,1,0,0,1,0,0,0,0,0,0,0,0,0,0)$  & $(1,0,1,0,1,1,0,1,0,0,1,1,0,1,0,1,0,1)$  & $-2538$     & $72$         \\ \hline
$C_{11}$ & $(1,0,0,1,0,0,1,1,1,1,1,0,0,0,0,0,1,1)$  & $(0,1,0,1,1,1,1,0,0,1,1,0,1,0,0,0,1,1)$  & $-3066$     & $36$         \\ \hline
\end{tabular}
\end{minipage}}
\end{table}
\end{enumerate}

\subsubsection{The group $D_{18}$ and $2 \times 2$ block matrices}

In this section, we consider the dihedral group $D_{18}$ with some $2 \times 2$ block matrices.

Let $G=\langle x,y \ | \ x^9=y^2=1, x^y=x^{-1} \rangle \cong D_{18}.$ Let $v=\sum_{i=0}^8 \sum_{j=0}^1 A_{1+i+9j}x^iy^j \in M(\mathbb{F}_2)D_{18},$ then
\begin{equation}
\tau_2(v)=\begin{pmatrix}
A&B\\
B^T&A^T
\end{pmatrix},
\end{equation}
with
$$A=CIRC(A_1,A_2,A_3,\dots,A_9),$$
$$B=CIRC(A_{10},A_{11},A_{12},\dots,A_{18})$$
where $A_i$ are some matrices.

We now employ a generator matrix of the form $[I  |  \tau_2(v)],$ where $I$ is the $36 \times 36$ identity matrix, for different forms of the matrices $A_i$ to search for binary self-dual codes with parameters $[72,36,12].$ We only list codes with parameters in their weight distributions that were not known in the literature before. 

\begin{enumerate}
\item[Case 1.] Here we let
$$A_1=circ(a_1,a_2),$$
$$A_2=circ(a_3,a_4),$$
$$\dots,$$
$$A_{12}=circ(a_{35},a_{36}).$$
Since $\tau_2(v)$ is fully defined by the first row, we only list the first rows of the matrices $A$ and $B$ which we label as $r_A$ and $r_B$ respectively.

\begin{table}[htbp]
\caption{New Type~I $[72,36,12]$ Codes}
\resizebox{0.65\textwidth}{!}{\begin{minipage}{\textwidth}
\centering
\begin{tabular}{ccccccc}
\hline
      & Type       & $r_A$                                   & $r_B$                                   & $\gamma$ & $\beta$ & $|Aut(C_i)|$ \\ \hline
$C_{12}$ & $W_{72,1}$ & $(0,1,0,1,0,0,0,1,1,0,1,0,0,0,0,0,0,1)$ & $(0,1,1,1,1,1,1,0,0,0,1,1,1,1,1,0,0,0)$ & $18$      & $237$   & $36$         \\ \hline
$C_{13}$ & $W_{72,1}$ & $(1,1,0,0,0,0,0,0,1,0,1,0,0,0,0,1,1,1)$ & $(1,0,0,1,1,1,0,1,0,0,1,0,1,1,1,0,1,0)$ & $18$      & $387$   & $36$         \\ \hline
$C_{14}$ & $W_{72,1}$ & $(1,0,0,1,0,1,1,1,0,1,1,1,1,1,0,1,0,0)$ & $(0,1,0,1,0,0,0,0,0,1,0,0,1,1,0,0,1,0)$ & $36$     & $417$   & $36$         \\ \hline
$C_{15}$ & $W_{72,1}$ & $(1,1,1,0,1,1,1,1,0,0,0,1,0,0,1,0,1,0)$ & $(0,1,1,0,1,0,1,1,0,1,0,1,0,0,0,0,0,0)$ & $36$     & $564$   & $36$         \\ \hline
\end{tabular}
\end{minipage}}
\end{table}

\item[Case 2.] Here we let
$$A_1=\begin{pmatrix}
a_1&a_2\\
a_3&a_1
\end{pmatrix}, A_2=\begin{pmatrix}
a_4&a_5\\
a_6&a_4
\end{pmatrix}, A_3=\begin{pmatrix}
a_7&a_8\\
a_9&a_7
\end{pmatrix}, \dots , A_9=\begin{pmatrix}
a_{25},a_{26}\\
a_{27},a_{25}
\end{pmatrix}$$
and
$$A_{10}=circ(a_{28},a_{29}),$$
$$A_{11}=circ(a_{30},a_{31}),$$
$$A_{12}=circ(a_{32},a_{33}),$$
$$\dots,$$
$$A_{18}=circ(a_{44},a_{45}).$$
We note here, that the first nine blocks are the $2 \times 2$ per-symmetric matrices - defined by three independent variables and that the next nine blocks are the $2 \times 2$ circulant matrices - defined by two independent variables both appearing in the first rows. This gives a search field of $2^{45}$. To save space, we only list the three variables of each persymmetric matrix which we label as $r_{A_1}, r_{A_2}, r_{A_3}, \dots, r_{A_9}$ and the first row of the matrix $B$ which we label as $r_B$ since this matrix is fully defined by the first row.

\begin{table}[h!]
\caption{New Type~I $[72,36,12]$ Codes}
\resizebox{0.48\textwidth}{!}{\begin{minipage}{\textwidth}
\centering
\begin{tabular}{ccccccccccccccc}
\hline
      & Type       & $r_{A_1}$     & $r_{A_2}$     & $r_{A_3}$     & $r_{A_4}$     & $r_{A_5}$     & $r_{A_6}$     & $r_{A_7}$     & $r_{A_8}$     & $r_{A_9}$     & $r_B$                                   & $\gamma$ & $\beta$ & $|Aut(C_i)|$ \\ \hline
$C_{16}$ & $W_{72,1}$ & $(0,0,0)$ & $(1,1,0)$ & $(0,1,0)$ & $(0,1,1)$ & $(1,0,1)$ & $(0,0,1)$ & $(0,1,1)$ & $(1,0,1)$ & $(1,1,0)$ & $(0,0,0,0,1,1,0,0,0,1,0,0,1,0,1,1,1,1)$ & $9$      & $264$   & $18$         \\ \hline
$C_{17}$ & $W_{72,1}$ & $(0,1,0)$ & $(0,1,0)$ & $(0,0,0)$ & $(0,0,1)$ & $(0,1,0)$ & $(1,0,0)$ & $(0,1,0)$ & $(0,1,0)$ & $(0,1,1)$ & $(1,1,1,1,1,1,0,1,1,0,1,1,1,0,1,1,1,0)$ & $27$     & $345$   & $18$         \\ \hline
\end{tabular}
\end{minipage}}
\end{table}
We note that the code $C_{17}$ in the above table is the first example of a self-dual $[72,36,12]$ code with $\gamma=27$ in its weight distribution.
\end{enumerate}

\subsubsection{The group $C_{6,3}$ and $2 \times 2$ block matrices}

In this section, we consider the cyclic group $C_{6,3}$ and some $2 \times 2$ matrices.

Let $G=\langle x \ | \ x^{6 \cdot 3}=1 \rangle \cong C_{6,3}.$ Let $v=\sum_{i=0}^5 \sum_{j=0}^2 A_{1+i+6j}x^{3i+j} \in M(\mathbb{F}_2)C_{6,3},$ then
\begin{equation}
\tau_2(v)=\begin{pmatrix}
A&B&C\\
C'&A&B\\
B'&C'&A
\end{pmatrix},
\end{equation}
where 
$$A=CIRC(A_1,A_2,\dots,A_6),$$
$$B=CIRC(A_{7},A_{8},\dots,A_{12}),$$
$$C=CIRC(A_{13},A_{14},\dots,A_{18}),$$
$$B'=CIRC(A_{12},A_7,A_8,\dots,A_{11}),$$
$$C'=CIRC(A_{18},A_{13},A_{14},\dots,A_{17})$$
and where $A_i$ are some $2 \times 2$ matrices.

We now employ a generator matrix of the form $[I  |  \tau_2(v)],$ where $I$ is the $36 \times 36$ identity matrix, for different forms of the matrices $A_i$ to search for binary self-dual codes with parameters $[72,36,12].$ We only list codes with parameters in their weight distributions that were not known in the literature before.

\begin{enumerate}
\item[Case 1.] Here we let 
$$A_1=circ(a_1,a_2),$$
$$A_2=circ(a_3,a_4),$$
$$A_3=circ(a_5,a_6),$$
$$\dots,$$
$$A_{18}=circ(a_{35},a_{36}).$$
We note that the search field here is $2^{36}.$ Since the matrix $\tau_2(v)$ is fully defined by the first row, we only list the first rows of the matrices $A, B$ and $C$ which we label as $r_A, r_B$ and $r_C$ respectively.

\begin{table}[h!]
\caption{New Type~I $[72,36,12]$ Codes}
\resizebox{0.65\textwidth}{!}{\begin{minipage}{\textwidth}
\centering
\begin{tabular}{cccccccc}
\hline
      & Type       & $r_A$                       & $r_B$                       & $r_C$                       & $\gamma$ & $\beta$ & $|Aut(C_i)|$ \\ \hline
$C_{18}$ & $W_{72,1}$ & $(1,0,0,0,1,0,0,1,1,1,1,0)$ & $(1,1,0,0,0,1,0,1,1,0,0,0)$ & $(1,1,0,1,0,0,1,1,0,0,0,1)$ & $36$     & $423$   & $72$         \\ \hline
\end{tabular}
\end{minipage}}
\end{table}

\item[Case 2.] Here we let
$$A_1=\begin{pmatrix}
a_1&a_2\\
a_3&a_1
\end{pmatrix}, A_2=\begin{pmatrix}
a_4&a_5\\
a_6&a_4
\end{pmatrix}, A_3=\begin{pmatrix}
a_7&a_8\\
a_9&a_7
\end{pmatrix}, \dots , A_9=\begin{pmatrix}
a_{25},a_{26}\\
a_{27},a_{25}
\end{pmatrix}$$
and 
$$A_{10}=circ(a_{28},a_{29}),$$
$$A_{11}=circ(a_{30},a_{31}),$$
$$A_{12}=circ(a_{32},a_{33}),$$
$$\dots,$$
$$A_{18}=circ(a_{44},a_{45}).$$
We note here, that the first nine blocks are the $2 \times 2$ per-symmetric matrices - defined by three independent variables and that the next nine blocks are the $2 \times 2$ circulant matrices - defined by two independent variables both appearing in the first rows. This gives a search field of $2^{45}$. To save space, we only list the three variables of each persymmetric matrix which we label as $r_{A_1}, r_{A_2}, r_{A_3}, \dots, r_{A_9}$ and the first rows of the matrices $A_{10}, A_{11}, A_{12}, \dots, A_{18}$ which we label as $r_{A_{10}}, r_{A_{11}}, r_{A_{12}}, \dots, r_{A_{18}}$ since these matrices are each defined by the first row.

\begin{table}[htbp]
\caption{New Type~I $[72,36,12]$ Codes}
\resizebox{0.65\textwidth}{!}{\begin{minipage}{\textwidth}
\centering
\begin{tabular}{cccccccccccccc}
\hline
                          & Type                        & $r_{A_1}$    & $r_{A_2}$    & $r_{A_3}$    & $r_{A_4}$    & $r_{A_5}$    & $r_{A_6}$    & $r_{A_7}$    & $r_{A_8}$    & $r_{A_9}$    & $\gamma$              & $\beta$                & $|Aut(C_i)|$          \\ \hline
\multirow{3}{*}{$C_{19}$} & \multirow{3}{*}{$W_{72,1}$} & $(1,1,1)$    & $(0,1,0)$    & $(0,1,1)$    & $(1,0,0)$    & $(1,1,0)$    & $(0,1,1)$    & $(0,0,1)$    & $(1,0,1)$    & $(1,1,1)$    & \multirow{3}{*}{$18$} & \multirow{3}{*}{$342$} & \multirow{3}{*}{$36$} \\
                          &                             & $r_{A_{10}}$ & $r_{A_{11}}$ & $r_{A_{12}}$ & $r_{A_{13}}$ & $r_{A_{14}}$ & $r_{A_{15}}$ & $r_{A_{16}}$ & $r_{A_{17}}$ & $r_{A_{18}}$ &                       &                        &                       \\
                          &                             & $(1,0)$      & $(0,1)$      & $(0,1)$      & $(1,1)$      & $(0,0)$      & $(0,1)$      & $(0,0)$      & $(0,0)$      & $(0,0)$      &                       &                        &                       \\ \hline
                        
\end{tabular}
\end{minipage}}
\end{table}

\begin{table}[htbp]
\resizebox{0.65\textwidth}{!}{\begin{minipage}{\textwidth}
\centering
\begin{tabular}{cccccccccccccc}
\hline
                          & Type                        & $r_{A_1}$    & $r_{A_2}$    & $r_{A_3}$    & $r_{A_4}$    & $r_{A_5}$    & $r_{A_6}$    & $r_{A_7}$    & $r_{A_8}$    & $r_{A_9}$    & $\gamma$              & $\beta$                & $|Aut(C_i)|$          \\ \hline
\multirow{3}{*}{$C_{20}$} & \multirow{3}{*}{$W_{72,1}$} & $(0,0,0)$    & $(0,0,0)$    & $(0,0,1)$    & $(1,0,0)$    & $(0,0,0)$    & $(0,1,0)$    & $(0,0,1)$    & $(0,1,0)$    & $(0,0,0)$    & \multirow{3}{*}{$36$} & \multirow{3}{*}{$510$} & \multirow{3}{*}{$36$} \\
                          &                             & $r_{A_{10}}$ & $r_{A_{11}}$ & $r_{A_{12}}$ & $r_{A_{13}}$ & $r_{A_{14}}$ & $r_{A_{15}}$ & $r_{A_{16}}$ & $r_{A_{17}}$ & $r_{A_{18}}$ &                       &                        &                       \\
                          &                             & $(1,1)$      & $(1,1)$      & $(1,0)$      & $(0,1)$      & $(0,0)$      & $(1,0)$      & $(0,0)$      & $(1,1)$      & $(1,0)$      &                       &                        &                       \\ \hline
                        
\end{tabular}
\end{minipage}}
\end{table}

\end{enumerate}

We would like to stress  that the above constructions represent a very small fraction of the possible matrix constructions that can be derived for the generator matrix $[I_{kn}|\tau_k(v)].$ That is, there are many more different choices for the groups and their sizes, the forms of the $k \times k$ matrices and their sizes which can all lead to constructing optimal binary self-dual codes of various lengths - this shows the strength of our generator matrix.

\section{Conclusion}

In this work, we defined group matrix ring codes that are left ideals in the group matrix ring $M_k(R)G.$ We generalized a well known matrix construction so that this generalization can be used to generate codes in two different ambient spaces. We presented a generator matrix for self-dual codes which we believe can be used to construct many new codes that could not be obtained from other, known in the literature, generator matrices. Additionally, we employed our generator matrix to search for binary self-dual codes. In particular, we constructed Type~I binary $[72,36,12]$ self-dual codes with new weight enumerators in $W_{72,1}$:

\begin{equation*}
\begin{array}{l}
(\gamma =0,\ \  \beta =\{186, 342\}), \\
(\gamma =9,\ \  \beta =\{264\}), \\
(\gamma =18,\ \beta =\{237, 342, 387, 420, 432\}), \\
(\gamma =27,\ \beta =\{345\}), \\
(\gamma =36,\ \beta =\{417, 423, 510, 543, 561, 564, 597\}) \\
\end{array}%
\end{equation*}
and Type~II binary $[72,36,12]$ self-dual codes with new weight enumerators:
\begin{equation*}
\begin{array}{l}
(\alpha =\{-2604, -2706, -2538, -3066\}). \\

\end{array}%
\end{equation*}

A suggestion for future work is to consider the generator matrix we presented in this work, for different groups, different types of the $k \times k$ matrices and different alphabets to search for new optimal binary self-dual codes of different lengths.


\begin{thebibliography}{99}

\bibitem{MAGMA} W. Bosma, J. Cannon and C. Playoust, \lq \lq The Magma algebra system. I. The user language", J. Symbolic Comput., vol. 24, pp. 235--265, 1997.


\bibitem{Bouyukliev1} I. Bouyukliev, V. Fack and J. Winna, \lq \lq Hadamard matrices of order 36", European Conference on Combinatorics, Graph Theory and Applications, pp. 93--98, 2005.


\bibitem{Dontcheva1} R. Dontcheva, \lq \lq New binary self-dual $[70,35,12]$ and binary $[72,36,12]$ self-dual doubly-even codes", Serdica Math. J., vol. 27, pp. 287--302, 2002.


\bibitem {Doughertybook} Dougherty, S.T., Algebraic Coding Theory Over
Finite Commutative Rings, Springer-Verlag, Springer Briefs in Mathematics (ISBN 978-3-319-59805-5), 2017. 

\bibitem{noncom} S.T. Dougherty, A. Leroy, ``Self-dual codes over non-commutative Frobenius rings",  Applicable Algebra in Engineering, Communication and Computing, 27 (3), pp. 185-203, 2016.


\bibitem{45} S.T. Dougherty, J. Gildea, R. Taylor and A. Tylshchak, \lq \lq
Group Rings, G-Codes and Constructions 0f Self-Dual and Formally Self-Dual
Codes", Des., Codes and Cryptog., Designs, vol. 86, no. 9, pp. 2115-2138, 2018.


\bibitem{Dougherty1} S.T. Dougherty, T.A. Gulliver, M. Harada, \lq \lq Extremal binary self dual codes", IEEE Trans. Inform. Theory, vol. 43, no. 6, pp. 2036--2047, 1997.


\bibitem{Dougherty2} S.T. Dougherty, J-L. Kim and P. Sole, \lq \lq Double circulant codes from two class association schemes", Advances in Mathematics of Communications, vol. 1, no. 1, pp. 45--64, 2007.
 

\bibitem{XII} J. Gildea, A. Kaya, R. Taylor and B. Yildiz, \lq \lq Constructions for Self-dual Codes Induced from Group Rings", Finite Fields Appl., vol. 51, pp. 71--92, 2018.


\bibitem{Gulliver1} T.A. Gulliver, M. Harada, \lq \lq On double circulant doubly-even  self-dual $[72,36,12]$ codes and their neighbors", Austalas. J. Comb., vol. 40, pp. 137-144, 2008.

\bibitem{Yankov2} M. Gurel, N. Yankov, \lq \lq Self-dual codes with an automorphism of order 17", Mathematical Communications, vol. 21, no. 1, pp. 97--101, 2016.

\bibitem{7} T. Hurley, \lq \lq Group Rings and Rings of Matrices", Int.
Jour. Pure and Appl. Math, Vol. 31, no. 3, pp. 319-335, 2006.

\bibitem{Kaya1} A. Kaya, B. Yildiz and I. Siap, \lq \lq New extremal binary self-dual codes of length 68 from quadratic residue codes over $\mathbb{F}_2+u\mathbb{F}_2+u^2\mathbb{F}_2$", Finite FIelds and Their Applications, vol. 29, pp. 160--177, 2014.

\bibitem{selfdual72} A. Korban, All known Type~I and Type~II $[72,36,12]$ binary self-dual codes, available online at \url{https://sites.google.com/view/adriankorban/binary-self-dual-codes}.

\bibitem{Korban1} A. Korban, S. Sahinkaya, D. Ustun, \lq \lq A Novel Genetic Search Scheme Based on Nature -- Inspired Evolutionary Algorithms for Self-Dual Codes", arXiv:2012.12248.

\bibitem{AIX} E.M. Rains, \lq \lq Shadow Bounds for Self-Dual Codes", IEEE Trans. Inf. Theory, vol. 44, pp. 134--139, 1998.


\bibitem{Yildiz1} N. Tufekci, B. Yildiz, \lq \lq On codes over $R_{k,m}$ and constructions for new binary self-dual codes", 	Mathematica Slovaca, vol. 66, no. 6, pp. 1511--1526, 2016.


\bibitem{Yankov1} N. Yankov, M.H. Lee, M. Gurel and M. Ivanova, \lq \lq Self-dual codes with an automorphism of order 11", IEEE, Trans. Inform. Theory, vol. 61, pp. 1188--1193, 2015.

\bibitem{Zhdanov1} A. Zhdanov, \lq \lq New self-dual codes of length 72", arXiv:1705.05779.

\bibitem{Zhdanov2} A. Zhdanov, \lq \lq Convolutional encoding of 60, 64, 68, 72-bit self-dual codes", arXiv:1702.05153.


\end{thebibliography}
\end{document}